\definecolor{DarkGreen}{rgb}{0.1,0.5,0.1}
\definecolor{DarkRed}{rgb}{0.5,0.1,0.1}
\definecolor{DarkBlue}{rgb}{0.1,0.1,0.5}
\definecolor{links}{RGB}{11, 85, 255}
\definecolor{cites}{RGB}{0, 200, 0}
\definecolor{urls}{RGB}{255, 116, 0}
 \let\mathscr\relax
\def\epsilon{\varepsilon}
\renewcommand{\hat}{\widehat}
\DeclareMathOperator*{\argmin}{\mathrm{argmin}}
\newtheorem{theorem}{Theorem}
\newtheorem{proposition}{Proposition}
\newtheorem{definition}{Definition}
\newtheorem{remark}{Remark}
\newtheorem{lemma}{Lemma}
\newcommand{\vecc}[1]{\ensuremath{\mathbf{#1}}}
\newcommand{\mech}{\ensuremath{\mathcal{M}}}
\newcommand{\budget}{B}
\newcommand{\timeLimit}{T}
\renewcommand{\algorithmiccomment}[1]{\bgroup\hfill\small\textcolor{gray}{//~#1}\egroup}
\renewcommand{\ALG@name}{Procedure}
\title{Parallel Contests for Crowdsourcing Reviews: Existence and Quality of Equilibria\thanks{\, A conference version appears in the Proceedings of the 4th ACM Conference on
Advances in Financial Technologies, AFT 2022. This work was partially supported by the ERC Advanced Grant 788893 AMDROMA ``Algorithmic and Mechanism Design Research in Online Markets'', and the MIUR PRIN project ALGADIMAR ``Algorithms, Games, and Digital Markets''.}}
\author[1]{Georgios Birmpas}
\author[2, 4]{Lyudmila Kovalchuk}
\author[5]{Philip Lazos}
\author[3, 4]{Roman Oliynykov}
\affil[1]{University of Liverpool}
\affil[ ]{{\small\textsf{\href{mailto:G.Birmpas@liverpool.ac.uk}{G.Birmpas@liverpool.ac.uk}}}\smallskip}
\affil[2]{{National Technical University of Ukraine ``Igor Sikorsky Kyiv Polytechnic Institute''}\smallskip}
\affil[3]{{V.N.Karazin Kharkiv National University}\smallskip}
\affil[4]{{IOHK}\smallskip}		
\affil[ ]{
{\small
    \textsf{\{\href{mailto:lyudmila.kovalchuk@iohk.io}{lyudmila.kovalchuk},
    \href{mailto:roman.oliynykov@iohk.io}{roman.oliynykov}\}@iohk.io}}}
   \affil[5]{Jump Trading}		
\affil[ ]{
{\small
    \textsf{
    \href{mailto:plazos@gmail.com}{plazos@gmail.com}}}} 
\date{\today}
\begin{document}

\maketitle

\begin{abstract}
  Part of the design of many blockchains and cryptocurrencies includes a \emph{treasury}, which periodically allocates collected funds to various projects that could be beneficial to their ecosystem. These projects are then voted on and selected by the users of the respective cryptocurrency. To better inform the users' choices, the proposals can be reviewed, in distributed fashion.
  Motivated by these intricacies, we study the problem of crowdsourcing reviews for different proposals, in parallel. During the reviewing phase, every reviewer can select the proposals to write reviews for, as well as the quality of each review. The quality levels follow certain very coarse community guidelines (since the review of the reviews has to be robust enough, even though it is also crowdsourced) and can have values such as `excellent' or `good'. Based on these scores and the distribution of reviews, every reviewer will receive some reward for their efforts. In this paper, we consider a simple and intuitive reward scheme and show that it always has Nash equilibria, under two different scenarios. In addition, we show that these equilibria guarantee constant factor approximations for two natural metrics: the total quality of all reviews, as well as the fraction of proposals that received at least one review, compared to the optimal outcome. 
\end{abstract}





\section{Introduction}

Since the invention of Bitcoin \cite{nakamoto2008bitcoin} in 2009, cryptocurrencies and other blockchain platforms have enjoyed a massive increase in popularity and market capitalization. Their development came with a promise for decentralization; instead of having a selected group of people manage their current operation and future direction, the users themselves should eventually express their preferences and guide every decision. This was first achieved just for the \emph{consensus layer}. More recent platforms such as Tezos and Polkadot have implemented forms of \emph{on-chain} governance. In addition to having a public discourse about their development on GitHub or in internet forums, there are formal governance processes and voting procedures to elect councils, hold referenda and adopt new changes into the protocol's codebase. The next piece missing in the puzzle of decentralization is how to \emph{fund} the necessary new features, with the final decision coming directly from aggregating user preferences. Some blockchains such as Cardano and Dash take this approach, having a publicly controlled treasury which allocates funds to proposals generated by the community. This process requires the careful design of many mechanisms, drawing from areas such as Crowdsourcing, Participatory Budgeting \cite{benade2021preference} and Distortion \cite{AnshelevichF0V21}. In this work, we focus on one important part of this process: how to provide the right incentives for the community to produce high quality \emph{reviews} across many proposals.

We provide a high level description of the relevant modules of Project Catalyst, the mechanism used by Cardano's treasury, as motivation for our modelling, assumptions, and results. A new funding round takes place every 12 weeks. The latest (ongoing as of May 2022) \emph{Fund8} allocated about $\$16,000,000$ worth of ADA (Cardano's native currency), with $5\%$ of the total given as rewards to reviewers. After the members of the community submit their proposals, there are 2 kinds of entities involved in the reviewing process: veteran
community advisors (vCA's) and community advisors (CA's). The CA's start by writing reviews for proposals. The reviews contain a written evaluation of the proposal, as well as a numerical score. The reviews (following some filtering for profanity, plagiarism, etc.) are published immediately under the CA's pseudonym, while the true identity of the person is usually secret. Then, the vCA's write meta-reviews (i.e., they review the reviews submitted by the CA's). This evaluation is only on the quality of the review itself, not its score or subjective opinion. The evaluation follows specific, well-known guidelines and is \emph{very coarse}: reviews can either be `excellent', `good' or `filtered out' (which are discarded). The CA's are then rewarded based on the quality of their reviews, as well as their choice over proposals to review, relative to actions of the other CA's.

The restrictions faced by Project Catalyst would apply to most blockchain solutions. Due to the crowdsourced nature of the process and their anonymous nature any mechanism used needs to be simple, clearly fair, and robust. Contrary to the approach possible in budget feasible mechanism design, or the use of the revelation principle in the contest literature, it is very difficult to elicit truthful information from the voters: there is no way to charge them for not delivering on their promise or to finely differentiate their quality of service. Additionally, requiring them to `lock' a certain amount of funds prior to producing the reviews would be highly unpopular as well, in a platform whose goal is to maximize participation. Essentially, in order to maximize robustness (to agents that might promise to participate and then disappear, or who could flood the system with useless reviews, etc.) rewards can only be given for complete reviews, after these have been produced. In that regard, we need to be careful so that even users who might not be particularly skilled have a reason to participate and build experience for future iterations. Finally, the system should be designed without any assumption about the CA's: given their anonymity, only they should have the freedom to select which proposals to write reviews for, given their motivation, expertise and time restrictions. As we will show, even though we shift the entire weight of the system to the anonymous users, good quality outcomes (comparable to a centralised setting with more advanced `know your customer' techniques) can be attained.

\subsection{Contributions}
We study a problem where reviews are crowdsourced from a set of agents. These agents, based on their skills, chose the proposals for which they will write a review, as well as the quality that their reviews will have. In this setting, a mechanism takes the decision of the agents as an input, and rewards them according to the quality of their efforts. We stress that the agents actions are not to report their skills or any private information (as this would be infeasible in a distributed setting, as explained above), but the \emph{reviews themselves}. Given this restriction, we are interested in the design of mechanisms that always have pure Nash equilibria, while in addition, we desire these equilibria to have good performance with respect to certain meaningful objectives.

Our contribution can be summarized as follows: We focus on a very simple mechanism for this problem, study its equilibria under two different scenarios, and investigate their performance with respect to the cumulative review quality (\emph{quality} objective), and the number of proposals that acquire at least one review (\emph{coverage} objective). More specifically:

\begin{itemize}
    \item In the first scenario (Section \ref{sec:1prop}), we assume that there is only one proposal, and the agents can choose between writing an excellent, a good, or no review at all. We prove, that our mechanism always has pure Nash equilibria in this case, and we show that at every such equilibrium, an $\frac{1}{4}$-approximation to the optimal \emph{quality} objective is guaranteed.
    \item In the second scenario (Section \ref{sec:mprop}), we assume that there are multiple proposals and every agent can write a review for any subset of proposals they want. This time however, their choice is limited between writing a review or not. 
    For this scenario, we show that pure Nash equilibria always exist, and all such equilibria provide a $\frac{1}{3}$-approximation to the optimal \emph{coverage} objective. However, this guarantee only holds if our mechanisms has access to \emph{twice} the budget for rewards that the optimal has. Otherwise, we show that there are instances where the coverage achieved at any equilibrium is arbitrarily bad.
\end{itemize}
In both cases, the guarantees are provided against an optimal algorithm that knows everything about the agents (including how capable they are at reviewing each proposal) and rewards each of them with just enough to cover their cost. Our mechanisms never learns anything private: the agents just write the reviews directly and share the rewards following our rules.

We want to point out that, as mentioned in the introduction, the formulation of our model is based on real life scenarios and actual challenges that blockchain initiatives have to face. The same goes for the motivation behind the mechanism we are focusing on, as a version of it is currently used by Cardano's treasury in order to resolve such problems. Our main goal is to investigate which are the theoretical guarantees that Project Catalyst provides under the aforementioned scenarios, and whether they match its real life performance (some examples of which are presented in Section \ref{sec:RD}). To our knowledge, the model that we consider has not been explored so far, although similar problems have been studied in the context of profit sharing games, crowdsourcing through contests, and budget feasible mechanism design. We provide a summary of indicative related papers from these areas, along with a discussion regarding similarities and differences in the next section.

\subsection{Related Work}

Our work is closely related to the area of crowd-sourcing through contests \cite{V2016}. In this type of problems, there is usually  an organizer that announces some tasks that need to be completed, along with a reward for the winning contestant(s). Then the contestants submit their solutions and the organizer decides who will be the winner(s). The goal is usually the design of mechanisms, the stable states of which provide good performance guarantees according to the desired objective. It is an area of problems that is also connected to the one of all-pay auctions, as indicated by several papers \cite{ChawlaHS19,luo2016incentive,siegel2009all,DiPalantinoV09}, while there are also some parallels with mining for proof-of-work cryptocurrencies such as Bitcoin \cite{arnosti2018bitcoin}. Although there is vast amount of works in the area, ours has several differences with most of the related literature. In particular, Chan et al.~\cite{ChanPL20} study the Price of Anarchy in Tullock contests in a setting that is similar to ours, with some of the main differences to be that each agent can choose one contest to participate in, and the behavior of the agents is not affected by time constraints as in our case.  The problem that they consider is also similar to the one presented in \cite{luo2016incentive}, although the latter regards the incomplete information setting. For other works that consider the incomplete information setting on problems of contest design, some relevant papers are from Chawla et al.~\cite{ChawlaHS19}, Moldovanu and Sela~\cite{MS01}, Elkind et al.~\cite{elkind2021contest}, and Glazer and Hassin~\cite{GH88}. Besides the difference in the information setting, these works along with other examples \cite{T95}, focus on objectives that are different than ours. One common difference is that typically, the reward paid by the designer is subtracted from the objective, and the quality of work produced by the participants can be measured accurately. In addition, our model also differs on how the strategy space of the agents is defined, as we assume that it is discretized, given the coarse, crowd-sourced nature of the evaluations. For papers that study objectives that are more aligned to ours, the reader should consider the works of Archak and Sundararajan~\cite{ArchakS09}, Gavious and Minchuk~\cite{GaviousM14} and Edith et al.~\cite{elkind2022simultaneous}, who also study the setting from a complexity standpoint.

  The problem of crowd-sourcing agents (subject to budget constraints) so that certain tasks are completed, has also been studied in the context of truthful budget feasible mechanism design, initiated by Singer~\cite{Singer10}. In these problems there is usually a single buyer, that wants to hire a set of workers that are able to complete certain tasks. Each worker can perform a single task and has a cost for it, which is her private information, while the buyer has a valuation function defined over the power set of the set of tasks. Although similar in principle to the problem that we study, there are also some key differences on how it is approached. In particular, the workers in this model declare their costs for performing the tasks, and the produced outcome is based on these declarations (they are the input to the designed mechanism). In contrast, in our case the agents can freely choose which tasks (reviews) they will complete, and each worker is not bounded by completing just a single task. In addition, the goal in budget feasible mechanism design, is the design of truthful (from the side of the workers) mechanisms that respect the budget constraints, and maximize the valuation function of the buyer. In  our case, the objectives for which we require good performance are of different nature, and they can also be affected by the effort that each agent decides to spend on the tasks that she performs (i.e., the strategy space is different). For more on this topic, we refer the reader to the following indicative works \cite{LeonardiMSZ21,AnariGN14,GoelNS14,AmanatidisBM16,AmanatidisBM17,GravinJLZ19}.
  
  Finally, we want to mention that similar problems have been considered in the context of profit sharing games \cite{AnshelevichP16,AugustineCEFGS15,BachrachSV13}, utility games \cite{Vetta02,GollapudiKPP17}, and project games \cite{BiloGM19,PolevoyTW14}. In most of the aforementioned examples, agents select a set of projects-tasks to participate in, and then they are rewarded according to their contribution. The goal is once again the design of mechanisms that perform well in their stable states. Although there are several differences to the setting that we study, i.e., strategy space of the agents, targeted objective for maximization, the way that the utility of an agent is defined (usually it is assumed that an agent's utility depends only to the reward that she gets, while the cost that she has for performing the task is not considered), etc, most of the related literature studies (among others) the performance of simple mechanisms that proportionally allocate the rewards, and thus follows an approach that is quite similar to ours.

\section{Preliminaries}

Let $V = \{1, 2, 3, \ldots, n\}$ be the set of community advisors (CA's), $P = \{P_1, P_2, P_3, \ldots, P_m\}$ be the set of proposals, and $\budget \in \mathbb{R}_{> 0}$ the total available budget that is distributed equally among the proposals. 
 The role of the CA's is to write reviews for the proposals. These reviews can be of varying quality, indicated by $q \in \mathcal{Q} = \{0, 1, \ldots, Q\}$, with a quality $q=0$ encoding that a CA did not submit a review for the respective proposal and $Q$ being the highest possible quality.
 
 In total, the strategy of CA $i$ is selecting a vector $\vecc q_i=[q_{i1}, q_{i2}, \ldots,\allowbreak q_{im}]$, where $q_{ij}$ is the quality of the the review that she will write for a proposal $j$. We combine all strategy vectors to form the strategy matrix $\vecc q = (\vecc q_1, \vecc q_2, \ldots, \vecc q_n)$. We also use $\vecc q^j=[q_{1j}, q_{2j}, \ldots, q_{nj}]$ to denote the qualities of the reviews that proposal $j$ gathered from the CA's at $\vecc q$. Moreover, we use function $f:[Q]\rightarrow \mathbb{R}_{\ge 0}$, to describe how much work is needed to write a review of a specific quality. We specify here that $f(\cdot)$ is a strictly increasing function on $q$, and $f(0)=0$ as not submitting a review takes zero effort. Note that function $f$ is CA independent. 
 In addition, every CA $i$ has a private vector of positive parameters, $\vecc s_i=[s_{i1}, s_{i2}, \ldots, s_{im}]$, that describe the \emph{time cost} (e.g., in seconds) required to produce a review of quality $q=1$ for each proposal. Often, we will refer to the agents' skills: the higher these skills are, the lower the $s_{ij}$ necessary to write a review. Therefore, $f(q_{ij})\cdot s_{ij}$ can be seen as the cost that CA $i$ has for writing a review of quality $q_{ij}$ for proposal $j$\footnote{For a crude analogy, imagine that $f(q)$ is the number of words required for a review of quality $q$ and $s_{ij}$ is the minutes-per-word possible by CA $i$ for proposal $j$.}. Note that the skills $s_{ij}$ are only known to the CA's. As with $\vecc q$, we use $\vecc s$ for the matrix of all skill vectors. Since the reviewing period has a fixed duration $\timeLimit$, every CA has to spend their effort within this timeframe. We call this the \emph{maximum effort} constraint. Formally, for every $i \in V$:
\begin{equation}\label{def:max-effort}
	\sum_{j \in P} f(q_{ij}) \cdot s_{ij} \le \timeLimit.
\end{equation}


A mechanism $\mech$ takes as input \emph{only} the CA's strategies $\vecc q$ (\emph{without} the individual skills) and outputs a payment vector $\vecc p = \vecc p(\vecc q) \in \mathbb{R}^n_{\ge 0}$. We require that the mechanism is budget feasible. More specifically, as the budget is distributed equally among the proposals, for every $j \in P$, we desire:
\begin{equation}\label{def:budget-feasible}
    \sum_{i=1}^n p_{ij}(\vecc q) \le \frac{\budget}{m}=\beta
\end{equation}
Given the strategy vector $\vecc q$, the utility of CA $i$ is:
\begin{equation}
    u_i^{\mech}(\vecc q) = p_i(\vecc q) 
    - \sum_{j=1}^m s_{ij} \cdot f(q_{ij}),
\end{equation}
that is, the payment received given the actual quality of the submitted reviews minus the effort of producing them. Given a mechanism $\mech$, we say that $\vecc q$ is a pure Nash equilibrium of $\mech$ if, for every CA $i$ and possible deviation $\vecc q_i' \in \mathcal{Q}$, we have that:
\begin{equation}
    u_i^{\mech}(\vecc q) \ge u_i^{\mech}(\vecc q_i', \vecc q_{-i}).
\end{equation}
We use $\vecc q \in \texttt{Feasible}(\vecc s, \budget, \timeLimit)$ if \Cref{def:max-effort} and \Cref{def:budget-feasible} are satisfied. 
Moreover, we use $\vecc q \in \texttt{PNE}(\vecc s, \budget, T)$ for the set of outcomes that are the pure Nash equilibria of mechanism $\mech$, in addition to being feasible. 

\subsection{Objectives}

The ideal mechanism $\mech$ would, at a pure Nash equilibrium, ensure that the final outcome provides a good cumulative review quality. In addition it  should also motivate the CA's to write reviews in a way, so that the number of proposals that receive at least one review is as big as possible. We formally define these two objectives below.

\paragraph{Quality.}
It is desirable that the total cumulative review quality is maximized:
\begin{equation}
    \texttt{Qual}(\vecc q) = \sum_{i \in V} \sum_{j \in P} q_{ij}.
\end{equation}

\paragraph{Coverage.}
Additionally, it is desirable to maximize the number of proposals that acquire at least one review:
\begin{equation}
\texttt{Cov}(\vecc q) = \left|\left\{j \in P \;|\; \sum_{i \in V} q_{ij} \ge 1\right\}\right|\footnote{Notice that the minimum quality of a review is 1.}.
\end{equation}

\subsection{Price of Anarchy}

To quantify the performance of a mechanism for an objective (in our case either quality or coverage), we want to compare the outcome that is produced at its worst pure Nash equilibrium, with the optimal possible outcome. For the optimal outcome we consider the non-strategic version of the problem, i.e., which would be the optimal solution if we were able to purchase reviews directly, at exactly the cost required to produce them by the corresponding agents \footnote{In the strategic setting on the other hand, there is some inefficiency because the agents participating in this procedure have incentives, and thus want to maximize their rewards.}. In that sense, when someone considers the non-strategic version of the problem, the budget feasibility constraint of \Cref{def:budget-feasible}, can be translated as follows: We say that a vector $\vecc q$ respects the budget feasibility requirements if for every $j \in P$, we have

\begin{equation}\label{def:offbudget-feasible}
    \sum_{i=1}^n s_{ij} \cdot f(q_{ij}) \le \frac{\budget}{m}=\beta.
\end{equation}
We use $\vecc q \in \texttt{nsFeasible}(\vecc s, \budget, \timeLimit)$ if \Cref{def:max-effort} and \Cref{def:offbudget-feasible} are satisfied. 
With these in hand, we can use the notion of Price of Anarchy (PoA) \cite{KP09} in order to quantify the performance of a mechanism. In particular, given $V, P$, and $f$, the PoA of mechanism $\mech$ for an objective \texttt{Obj} is defined as:
\begin{equation}
\texttt{PoA}(\mech) = 
\max_{\vecc s}
\max_{\vecc q}
\min_{\vecc q'}
\frac{\texttt{Obj}(\vecc q)}{\texttt{Obj}(\vecc q')},
\end{equation}
where $\vecc q \in \texttt{nsFeasible}(\vecc s, \budget, \timeLimit)$, and $\vecc q' \in \texttt{PNE}(\vecc s, \budget, T)$. Intuitively, given an objective, this metric describes the ratio between the optimal outcome, over the worst case equilibrium, for the worst possible selection of skills $\vecc s$. At a high level: we select a mechanism $\mech$ and then an adversary creates a group of reviewers with skill $\vecc s$ in order to embarrass us, by comparing the objective value of the worst equilibrium of our mechanism (i.e. $\vecc q'$) to the best possible outcome $\vecc q$, for these $\vecc s$. 
 
\paragraph{Resource Augmentation.} Sometimes, as PoA results can be overly pessimistic, it might be worth investigating if they can be improved, when a mechanism has access to additional resources (in our case, to a larger budget). 

Specifically, we would like to explore how the optimal outcome, for a specific objective, compares with the worst possible outcome that a mechanism produces at a pure Nash equilibrium, when it has access to a larger budget. To this end, we introduce the notion of $\texttt{PoA}_{\texttt{k}}$, which is formally defined as follows:
\begin{equation}
\texttt{PoA}_{\texttt{k}}(\mech) = 
\max_{\vecc s}
\max_{\vecc q}
\min_{\vecc q^{\star}}
\frac{\texttt{Obj}(\vecc q)}{\texttt{Obj}(\vecc q^{\star})},
\end{equation}
where $\vecc q \in \texttt{nsFeasible}(\vecc s, \budget, \timeLimit)$, and $\vecc q^{\star} \in \texttt{PNE}(\vecc s, k \cdot \budget, T)$, for $k \geq 1$. 

\subsection{A Simple Proportional Mechanism}
As we have already mentioned, a mechanism in this setting simply takes as input the strategy vector $\vecc q$ of the CA's, and outputs a payment vector $p(\vecc q)$. In this work, we focus on the study and analysis of a robust and intuitive mechanism that essentially rewards the CA's in a proportional manner. In particular, recall that each proposal $j$ has an available budget $\beta$, that eventually will be distributed to CA's that write a review for it. In addition, as we have pointed out, the function $f(\cdot)$ defines how much work a review of a specific quality needs. We are interested in mechanisms that reward the agents in a \emph{fair} way, depending on the quality of the reviews produced (and by implication, the work they put in). To this end, we study the following proportional reward scheme, that also defines the mechanism that we focus on: 

\begin{equation}
    p_i(\vecc q) = \sum_{j \in P} 
    \beta \cdot
    \frac{f(q_{ij})}{\sum_{i \in V} f(q_{ij})}
\end{equation}
Intuitively, this is a mechanism with which we try to `match' the difficulty of producing a review of a specific quality, by providing a proportional reward for each case.

For the remainder of the paper, we refer to this mechanism as the \emph{proportional mechanism}, and we focus on the existence and the performance of its equilibria (with respect to the objectives that we defined), under two different scenarios. From this point on-wards, we will also refer to the CA's as agents.

\subsection{The Choice of Function $f(\cdot)$}\label{sec: func}
The function $f(\cdot)$ dictates the structure of the equilibria produced by the mechanism, depending on our goal, which could range from only requiring reviews of the highest quality to enabling players of any skill level to contribute. This can be tuned by the designer, who can show the agents reviews of different qualities with their associated `difficulties' $f$. We develop some intuition through a couple examples, illustrating some reasonable choices. The simplest example is a linear function $f(q) = q$, which is the one actually used for Project Catalyst, as it provides a good balance between the higher difficulty of writing good reviews and the rewards provided, giving a sense of `fairness' to the participants. Let us compare this with another possible choice, the exponential $f(q) = 2^q$. This choice allows fewer agents to write high quality reviews, as only the most skilled ones would have be able to write such a review without violating their time constraint. On the flip side, these highly skilled agents would be highly motivated to provide those reviews, collecting the majority of rewards, while lower skilled agents would have to wait and find `gaps' in proposals, in order to contribute lower quality reviews and collect the remaining rewards. On the contrary, a concave choice such as $f(q) = \log (q)$ skews this balance towards lower skilled agents, which now have the power to write higher quality reviews with less effort.

\section{A Single Proposal: The Quality Objective}\label{sec:1prop}

In this section, we study the proportional mechanism under the simple scenario where there is only one proposal, and for every $i \in V$, we have that $q_{i} \in \{0, 1, \alpha\}$, for some $\alpha \geq 2$. This set of possible qualities can be interpreted as writing no review, writing a 'good' review, and writing an 'excellent' review respectively. Notice that the intuition behind our formulation is that for the designer, an excellent review is $\alpha$ times better than a good one. Since there is only one proposal, the interesting question under this scenario is to explore how our mechanism performs under the quality objective. So the first step is to examine whether this mechanism always has pure Nash equilibria.

Before we begin, we point out that for the remainder of this section we rename the agents so that $s_{i} \le s_{i+1}$ for every $i$. Finally, we choose function $f(\cdot)$ to be:

\[	f(q_i)=  
	\begin{cases}
	0, & \text{if $q_i=0$} \\
	1, & \text{if $q_i=1$}  \\
	\alpha, & \text{if $q_i=\alpha$}
	\end{cases} 
\]
for every $i \in V$, and $\alpha \geq 2$, indicating that the effort for writing a review is analogous to its quality\footnote{We note here for completeness that in the special case where every agent decides to not write a review, i.e., $q_i=0$ for every $i$, the proportional reward scheme is not applied and simply none of the agents is rewarded, i.e., the reward of everyone is 0. }. As mentioned in Section \ref{sec: func}, this function is used by Project Catalyst, and captures how much work is needed to write a review of a specific quality, given the provided guidelines. 

\subsection{A Single Proposal: Existence of Pure Nash Equilibria}\label{sec:prop1}

For this particular case, of one proposal and three possible strategies, we introduce a small modification that the proportional mechanism needs, so that the existence of pure Nash equilibria can always be guaranteed\footnote{This small calibration was not demonstrated in the preliminary version of our paper \cite{BirmpasKLO22}.}. 

\begin{equation}
    p_i(\vecc q) = 
    \beta \cdot
    \frac{f(q_{i})}{\sum_{j \in V} f(q_{j})+\sqrt{\alpha}}
\end{equation}

We call this modified version of the mechanism, the \emph{modified} proportional mechanism. Notice that the only difference is the addition of $\sqrt{\alpha}$ in the denominator, which implies smaller rewards for the agents, and that part of the budget will not be allocated. This modification creates a (needed) better balance between the rewards that an agent will get by following different strategies, so that the existence of pure Nash equilibria, as we will see in the theorem that follows, is not affected.

\begin{theorem}\label{thm:pne}
Consider an instance with only one proposal, and $q_i \in \{0, 1, \alpha\}$ for every $i \in V$. Then, the modified proportional mechanism has always at least one pure Nash equilibrium.
\end{theorem}
\begin{proof}
 Let $i^\lambda$ to be the largest $i$ for which we have $s_{i} \le \frac{\beta}{i+\sqrt{\alpha}}$. We begin by defining an initial state where every agent in $G_0 = \{i \;|\; i \le i^\lambda\}$ writes a good review, while every agent in $Z_0 = \{i \;|\; i > i^\lambda\}$ does not write a review \footnote{If $G_0$ is empty, then it is easy to see that having no agent writing a review, is a pure Nash equilibrium where the utility of every agent is 0.}. Now notice that the utility of an agent $i \in G_0$ is
\begin{equation}
    \frac{\beta \cdot f(1)}{\sum_{i \leq i^\lambda} f(1)+\sqrt{\alpha}} - s_i \cdot 1 = \frac{\beta}{i^\lambda+\sqrt{\alpha}} - s_i \ge 0,
\end{equation}
as $i^\lambda$ is the largest index for which we have $s_{i} \le \frac{1}{i+\sqrt{\alpha}}$, and $s_i \leq s_{i^\lambda}$ for every $i \in G_0$. In addition, the utility of the agents in $Z_0$ is zero by definition. We will construct a pure Nash equilibrium by repeatedly updating the agents; best responses \footnote{A best response of an agent $i$ is the best strategy that she can follow, given the strategies of the other agents.} given the current set of reviews, starting from $G_0$.

Let $E_t \subseteq V$, $G_t \subseteq V$, and $Z_t \subseteq V$ be the sets of the agents that wrote an excellent, a good, and no review respectively at step $t$. Assume for now that the utilities of the agents of every set is non-negative, and let $g(E_t, G_t, Z_t)$ be a function which refines the current strategies of the agents. Specifically $g(E_t, G_t, Z_t) = (E_{t+1}, G_{t+1}, Z_{t+1})$ such that:
\begin{itemize}
    \item  Let $i'$ be the agent with the minimum index in set $G_t$. If for $i'$ is beneficial to convert her good review to an excellent, then $E_{t+1} = E_t \cup \{i'\}$. Notice that if this conversion is not beneficial for agent $i'$, then it is not beneficial for any $i \in G_t$, as $s_i' \leq s_i$. Moreover, observe that if agent $i'$ is not able to write an excellent review (although it is beneficial for her) due to constraint $T$, then the same holds for any agent in $G_t$.
    \item Let $i_1 > i_2 > \ldots > i_{\alpha-1}$ be the largest $\alpha - 1$ indices in $G_t$. For them, it may be the case that a good review yields a negative utility, now that an excellent review has been added. The reason for this is that after the conversion of one review from excellent to good, the denominator of the reward has been increased by an additional factor of $\alpha-1$. Removing at most $\alpha-1$ agents from $G_t$, choosing the ones with the highest $s_i$'s, is enough to bring the denominator back to what it was before, thus guaranteeing that the newly formed set has only agents with non-negative utility. So, starting from $i_{\alpha-1}$, we remove up to $\alpha - 1$ of these agents from $G_t$ to $Z_t$, one by one until there are no agents with negative utility. Therefore, we obtain $G_{t+1}$, where every agent obtains a non-negative utility from their review, as well as $Z_{t+1}$ which contains every agent in $Z_t$, along with the newly added agents that were removed from $G_t$. 
\end{itemize}
By the definition of function $g(\cdot)$, it is clear that repeated application on the inputs $(E_t, G_t, Z_t)$ would eventually reach a steady state (i.e., a \emph{fixpoint}) $(E_F, G_F, Z_F)$, such that $g(E_F, G_F, Z_F) = (E_F, G_F, Z_F)$. This is clear as after each application, the sets $E_{t+1}$, $Z_{t+1}$ are supersets of sets $E_t$ and $Z_t$ respectively, while $G_{t+1}$ is a subset of $G_t$. Since the number of agents is finite this procedure will eventually stop. 
We will show that the fixpoint of $g(\cdot)$ starting from $(\emptyset, G_0, V\setminus G_0)$ (our initial configuration) is in fact a pure Nash equilibrium. So let $(E_t, G_t, Z_t) = g^t(\emptyset, G_0, V\setminus G_0)$ be the sets resulting from $t$ composed applications of $g$, where $g^0$ is the identity function. 

Using induction, we will first show that for every $t$, the agents in $E_t$ are always playing a best response 
and in addition, $i_e < i_h$ for any $i_e \in E_t, i_h \in G_t$. Clearly, for $t=0$ this is true as $E_0 = \emptyset$ i.e., it does not contain any agent. For the induction hypothesis, we assume that the our claim holds for $t$. Now let $i' = \min_i G_t$ be the agent who deviates to writing an excellent review instead of a good one: by definition, this is her best response. We need to show that for the rest of the agents in $E_{t+1}$, i.e, $E_{t+1} \setminus \{i'\} = E_t$, writing an excellent review remained a best response. To this end, let $i_e  \in E_t$: we will prove that for her, writing an excellent review is better than writing either a good review or no review at all. This will be done in two steps: first we will show that this holds after applying the first part of $g(\cdot)$, before the set $G_t$ is updated. At this point, we compare the current utility of $i_e$ with the one of $i'$:
\begin{equation*}
    \frac{\alpha \cdot \beta}{\alpha \cdot |E_{t+1}| + |G_{t}|+\sqrt{\alpha}-1} - \alpha\cdot s_{i_e}
    \ge 
    \frac{\alpha \cdot \beta}{\alpha \cdot |E_{t+1}| + |G_{t}|+\sqrt{\alpha}-1} - \alpha\cdot s_{i'}
    \ge 0,
\end{equation*}
because $i_e \le i'$ from the inductive hypothesis, while in addition, $i_e \le i' \Rightarrow s_{i_e} \le s_{i'}$ by the definition of the $s_i$. Therefore, writing an excellent review is at least as good as writing nothing for agent $i_e$, thus we just need to compare with the payoff of a good review. Since writing an excellent review is a best response for $i'$:

\begin{align*}
    &\frac{\alpha \cdot \beta}{\alpha \cdot |E_{t}| + |G_{t}|+\sqrt{\alpha}+\alpha-1} - \alpha\cdot s_{i'}
    \ge
    \frac{\beta}{\alpha \cdot |E_{t}| + |G_{t}|+\sqrt{\alpha}} - s_{i'}\\
    \Rightarrow\quad
    &\frac{\alpha \cdot \beta}{\alpha \cdot |E_{t+1}| + |G_{t}|+\sqrt{\alpha}-1} - \alpha\cdot s_{i'}
    \ge
    \frac{\beta}{\alpha \cdot |E_{t+1}| + |G_{t}|+\sqrt{\alpha} - \alpha } - s_{i'}\\
    \Rightarrow\quad
    &\frac{\alpha \cdot \beta}{\alpha \cdot |E_{t+1}| + |G_{t}|+\sqrt{\alpha}-1} - \frac{\beta}{\alpha \cdot |E_{t+1}| + |G_{t}|+\sqrt{\alpha} - \alpha}
    \ge
    (\alpha - 1) \cdot s_{i'} \ge (\alpha - 1) \cdot s_{i_e}\\
    \Rightarrow\quad
    &\frac{\alpha \cdot \beta}{\alpha \cdot |E_{t+1}| + |G_{t}|+\sqrt{\alpha}-1} - \alpha \cdot s_{i_e}
    \ge \frac{\beta}{\alpha \cdot |E_{t+1}| + |G_{t}|+\sqrt{\alpha} - \alpha} - s_{i_e},
\end{align*}
with the last inequality showing that writing an excellent review is a best response for $i_e$.

All that is left to show, is that removing some good reviews (up to $\alpha-1$), will not cause any of the agents in $E_{t+1}$ to deviate to writing a good review instead. For any $i_e \in E_{t+1}$, we have that:
\begin{align}\label{ineq:current_best}
    &\frac{\alpha \cdot \beta}{\alpha \cdot |E_{t+1}| + |G_{t}|+\sqrt{\alpha}-1} - \alpha \cdot s_{i_e} \ge \frac{\beta}{\alpha \cdot |E_{t+1}| + |G_{t}|+\sqrt{\alpha} - \alpha} - s_{i_e} \notag \\
    \iff\quad
    &\frac{\alpha \cdot \beta}{\alpha \cdot |E_{t+1}| + |G_{t}|+\sqrt{\alpha}-1}- \frac{\beta}{\alpha \cdot |E_{t+1}| + |G_{t}|+\sqrt{\alpha} - \alpha} \geq \alpha \cdot s_{i_e}-s_{i_e}
\end{align}
Our goal now will be to show that: 
\begin{align}\label{ineq:current_best1.5}
  &\frac{\alpha}{\alpha \cdot |E_{t+1}| + |G_{t+1}|+\sqrt{\alpha}}- \frac{1}{\alpha \cdot |E_{t+1}| + |G_{t+1}|+\sqrt{\alpha} - \alpha +1}\notag\\ \geq &\frac{\alpha}{\alpha \cdot |E_{t+1}| + |G_{t}|+\sqrt{\alpha}-1}- \frac{1}{\alpha \cdot |E_{t+1}| + |G_{t}|+\sqrt{\alpha} - \alpha}
  \end{align}
  which means
  \begin{align}\label{ineq:current_best2}
    &\frac{\alpha \cdot \beta}{\alpha \cdot |E_{t+1}| + |G_{t+1}|+\sqrt{\alpha}}- \frac{\beta}{\alpha \cdot |E_{t+1}| + |G_{t+1}|+\sqrt{\alpha} - \alpha +1}\notag \\ \geq &\frac{\alpha \cdot \beta}{\alpha \cdot |E_{t+1}| + |G_{t}|+\sqrt{\alpha}-1}- \frac{\beta}{\alpha \cdot |E_{t+1}| + |G_{t}|+\sqrt{\alpha} - \alpha},
\end{align}
as by combining \eqref{ineq:current_best} and \eqref{ineq:current_best2} will give us,
\begin{align} \label{ineq:induction}
    &\frac{\alpha \cdot \beta}{\alpha \cdot |E_{t+1}| + |G_{t+1}|+\sqrt{\alpha}} - \alpha \cdot s_{i_e} \ge \frac{\beta}{\alpha \cdot |E_{t+1}| + |G_{t+1}|+\sqrt{\alpha} - \alpha+1} - s_{i_e}
\end{align} 
which shows that deviating to a writing a good review is not beneficial. To do so, we will study the monotonicity of function 
\begin{equation*}
   f(x)=\frac{\alpha}{x+\sqrt{\alpha}} - \frac{1}{x+\sqrt{\alpha} - \alpha + 1}
\end{equation*}
for $\alpha \geq 2$, and $x \ge \alpha$ (as we currently assume that there is at least one agent that writes an excellent review). Setting $y=x+\sqrt{\alpha}$, we have the following:
\begin{align*}
    &-\frac{\alpha}{y^2}+\frac{1}{(y-\alpha+1)^2}=0\\
    \Rightarrow\quad
    & y^2=\alpha \cdot(y-\alpha+1)^2\\
     \Rightarrow\quad
    & (\alpha-1)\cdot y^2 -2\cdot\alpha\cdot(\alpha-1)\cdot y +\alpha\cdot (-\alpha+1)^2=0 \\
     \Rightarrow\quad
    &  y^2 -2\cdot\alpha\cdot y -\alpha\cdot (\alpha-1)=0
\end{align*}
For the solutions of this equation, we have that $y_{1}=\alpha+\sqrt{\alpha}$ and $y_{2}=\alpha-\sqrt{\alpha}$, which means $x_1=\alpha$ and $x_2=\alpha-2\cdot\sqrt{\alpha}$. So we are only interested in $x_1$ (as $x\geq \alpha$). It is easy to see that function $f(x)$, for $x\geq x_1$ is decreasing. The latter guaranties that if $\alpha\leq z_1 \leq z_2$, then

\begin{equation*}
\frac{\alpha}{z_1+\sqrt{\alpha}} - \frac{1}{z_1+\sqrt{\alpha} - \alpha + 1} \geq \frac{\alpha}{z_2+\sqrt{\alpha}} - \frac{1}{z_2+\sqrt{\alpha} - \alpha + 1}
\end{equation*}

Now recall that $|G_{t+1}|\leq|G_{t}|-1$ as we know that an agent moved from writing a good review to writing an excellent review (and probably at most $\alpha-1$ agents left after that). Therefore, we have $\alpha \cdot |E_{t+1}|+|G_{t+1}|\leq \alpha \cdot |E_{t+1}|+|G_{t}|-1$, and $|E_{t+1}|\geq 1$. By setting $z_1=\alpha \cdot |E_{t+1}|+|G_{t+1}|\leq \alpha \cdot |E_{t+1}|+|G_{t}|-1=z_2$ we have that relation \eqref{ineq:current_best1.5} always holds, and therefore, we get \eqref{ineq:induction}.

Now let us argue about the agents in $G_t$. For this set of agents is sufficient to show \footnote{The reason for this is that when we reach set $G_F$, we know that either it is not beneficial for any of the agents of this set to write an excellent review, or they cannot due to constraint $T$.} that for every $t$ their utility is non-negative, thus they do not want to move to set $Z_t$. Again, by using induction, the statement trivially holds for set $G_0$ by definition. By the induction hypothesis we know that this is also true for $G_t$. After the application of function $g(\cdot)$, the minimum indexed agent will write an excellent review instead of a good one, and this may make at most $a-1$ agents from $G_t$ to end up with a negative utility. However, recall that set $G_{t+1}$ is formed by moving this set of agents to $Z_t$ (thus forming $Z_{t+1}$). Therefore, set $G_{t+1}$ is a subset of $G_t$ that only contains agents with non-negative utility.

Finally, the last set that remains is $Z_t$. We want to show that for any $t$, the agents in $Z_t$ will have a non-positive utility if they try to write either an excellent, or a good review. Initially, notice that in general, an agent $i$ that writes no reviews and at the same time has a negative utility by writing a good review, she has also an negative utility by writing an excellent review. The reason is that if $$\frac{\beta }{\alpha \cdot |E|+|G|+\sqrt{\alpha}+1} - s_i<0,$$ this implies that 
\begin{equation*}
  \frac{\beta}{\alpha \cdot |E|+|G|+\sqrt{\alpha}+\alpha} - s_i<0
  \Rightarrow
  \frac{\alpha \cdot \beta }{(\alpha+1) \cdot |E|+|G|+\sqrt{\alpha}} - \alpha \cdot s_i<0,  
\end{equation*}
where $E, G$, the sets of agents that were currently writing excellent and good reviews respectively. We proceed in proving the statement inductively. For $t=0$, no agent $i \in Z_0$ wants to write a good review, as by definition $$\frac{\beta }{i+\sqrt{\alpha}} - s_i<0$$ for every $i >i^\lambda$. Thus, they do not want to deviate to writing an excellent review as well. Now observe that the reward to an agent decreases as $t$ grows (as the denominator of the reward increases each time an agent writes an excellent instead of a good review). Set $Z_{t+1}$, consists of the agents from $Z_t$, and the agents that were removed from $G_t$ since they did not want write a good review. By using the inductive hypothesis and the observation that the reward at step $t+1$ is smaller than the reward at step $t$, it is easy to see that the agents of $Z_{t+1}$ by deviating to either writing an excellent or a good review, derive a negative utility.

Putting everything together, we know that at the fixpoint $(E_F, G_F \allowbreak, Z_F)$ all the agents are playing their best response. Therefore, the proportional mechanism always has at least one pure Nash equilibrium. 
\end{proof}

\begin{remark}
\normalfont We would like to point out that the pure Nash equilibrium described by the procedure presented in the proof of Theorem \ref{thm:pne} is not unique. In particular, consider the following example: Suppose that we have an instance with 2 agents and skills $s_1=\frac{1}{4}-\epsilon$ for some $0<\epsilon<<\frac{1}{100}$, and $s_2=\frac{1}{9}$. In addition, let $\alpha=4$, $B=1$, and $T=\frac{1}{2}$. For this instance, there are at least two pure Nash equilibria, and the quality guaranties that they provide are different. Initially, notice that due to constraint $T$, agent 1  cannot write an excellent reviews as $4\cdot(\frac{1}{4}-\epsilon)>\frac{1}{2}$. Thus, in any pure Nash equilibrium the only available option for her is to either write a good review, or to not write a review at all. We proceed in describing two different pure Nash equilibria. 
\begin{itemize}
    \item For the first one, consider the state where every agent writes a good review. The utility of agent 1 is $\frac{1}{4}-\frac{1}{4}+\epsilon>0$, therefore this is a best response for her. Regarding agent 2, her utility is $\frac{1}{4}-\frac{1}{9}>0.138>0$, so the only meaningful deviation for her is to write an excellent review. In that case their utility becomes $\frac{4}{7}-\frac{4}{9}<0.127$, and therefore this deviation is not profitable. We conclude that this is a pure Nash equilibrium that provides a total quality of 2.
    \item For the second one, consider the state where agent 2 writes an excellent review, while agent 1 does not write a review. The only possible deviation of agent 1 is to write a good review. In this case her utility becomes $\frac{1}{7}-\frac{1}{4}+\epsilon<0$, thus there is no incentive in doing so. The utility of agent 2 at this state is $\frac{4}{6}-\frac{4}{9}=2/9$. Therefore the only meaningful deviation for each one them is to write a good review. In that case, their utility becomes $\frac{1}{3}-\frac{1}{9}=2/9$. So once again, this deviation is not profitable as the utility remains the same. We conclude that this is a pure Nash equilibrium that provides a total quality of 4.
\end{itemize}

\end{remark}

As this version of the mechanism always guarantees existence of pure Nash equilibria, in the next section we explore the quality of these equilibria under the \emph{modified} proportional mechanism.

\subsection{Quality Approximation Guarantees}

 We proceed in studying how the \emph{modified} proportional mechanism performs at its pure Nash equilibria with respect to the quality objective. For this, we will first need to explore the properties of the optimal outcome when the constraint $T$ does not exist. Given our assumption that an excellent review is $\alpha$ times more useful than a good review, we could first define the optimal outcome of the problem, without the $T$ constraint, as follows:
\begin{definition}[Optimal Outcome]
    Given a budget of $\beta$ and agents with skills $s_1 \le s_2 \le \cdots \le s_n$, the optimal outcome, when constraint $T$ does not exist, is:
    $$
    \begin{array}{ll@{}ll}
        \text{maximize}  & \alpha\cdot|E| + |G|&\\
        \text{subject to}& \sum_{i \in E} \alpha \cdot s_i + \sum_{i \in G} s_i &\le \beta,
    \end{array}
    $$
where $E$ and $G$ are the sets of agents that write excellent and good reviews respectively.    
\end{definition}

 We continue with the following lemma, that shows how the optimal outcome can be computed.

\begin{lemma}\label{lem:greedy}
The optimal outcome can be found using the following greedy algorithm:
\begin{itemize}
    \item Starting from $s_1$, add excellent reviews until no more can be added without exceeding the budget.
    \item Fill the remaining budget with good reviews.
\end{itemize}
\end{lemma}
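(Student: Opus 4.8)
The plan is a two-stage exchange argument: first normalize an arbitrary optimal solution to a ``prefix'' shape, and then show that, among all prefix solutions, the one output by the greedy maximizes the objective.

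\emph{Stage 1 (prefix form).} Let $(E^\star,G^\star)$ be an optimal solution. I claim we may assume $E^\star=\{1,\dots,e\}$ and $G^\star=\{e+1,\dots,e+g\}$ for some $e,g\ge 0$. Suppose there are indices $j<i$ with $i\in E^\star$ and $j\notin E^\star$: if $j$ is idle, replace $i$ by $j$ in $E^\star$; if $j\in G^\star$, swap the roles of $i$ and $j$. In both cases $\alpha|E^\star|+|G^\star|$ is unchanged, while the budget consumption changes by $\alpha(s_j-s_i)\le 0$ or by $(\alpha-1)(s_j-s_i)\le 0$ respectively, so feasibility is preserved. Iterating this (the multiset of indices used by $E^\star$ strictly decreases, so the process terminates) turns $E^\star$ into a prefix $\{1,\dots,e\}$; an analogous one-sided exchange moving each good review onto the cheapest still-unused agent turns $G^\star$ into $\{e+1,\dots,e+g\}$. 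Since $\alpha\sum_{i=1}^{e}s_i\le\beta$, we get $e\le k$, where $k$ is the number of excellent reviews produced by the greedy.

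\emph{Stage 2 (greedy is the best prefix).} For $0\le e\le k$, let $g(e)$ be the largest number of good reviews fitting in the budget remaining after excellent reviews for $\{1,\dots,e\}$ (placed greedily on the cheapest available agents $e+1,e+2,\dots$), and set $V(e)=\alpha e+g(e)$. By Stage 1 the optimum equals $\max_{0\le e\le k}V(e)$, and the greedy attains exactly $V(k)$, so it suffices to show $V$ is non-decreasing on $\{0,\dots,k\}$. Fix $e<k$. If the $e$-prefix uses at least $\alpha$ good reviews, upgrade agent $e+1$ from good to excellent and delete the $\alpha-1$ most expensive good reviews currently present: after normalization all good reviews lie on agents of index $>e+1$, so each deletion frees at least $s_{e+1}$ of budget, which covers the extra cost $(\alpha-1)s_{e+1}$ of the upgrade; the result is feasible with $e+1$ excellent reviews and objective $\alpha(e+1)+(g(e)-\alpha)=V(e)$, hence $V(e+1)\ge V(e)$. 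If instead fewer than $\alpha$ good reviews are present, then since $e+1\le k$ we have $\alpha\sum_{i=1}^{e+1}s_i\le\beta$, so excellent reviews on $\{1,\dots,e+1\}$ alone are feasible with objective $\alpha(e+1)>\alpha e+g(e)=V(e)$. Either way $V(e+1)\ge V(e)$, and induction gives $V(k)\ge V(e)$ for all $e\le k$; combined with feasibility of the greedy output, this proves optimality.

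The main obstacle is the budget accounting in the upgrade step of Stage 2 — verifying that deleting the $\alpha-1$ priciest good reviews releases enough budget to turn one good review into an excellent one. This is precisely where Stage 1 and the sorted order $s_1\le\cdots\le s_n$ are used: after normalization every good review sits on an agent at least as costly as $s_{e+1}$, so each of the $\alpha-1$ deletions frees at least $s_{e+1}$. The only other point requiring care is the boundary case with fewer than $\alpha$ good reviews, handled separately above via the definition of $k$.
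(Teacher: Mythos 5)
Your proof is correct, and it is organized differently from the paper's. Your Stage~1 is exactly the paper's argument for this lemma: the paper swaps the largest-index excellent reviewer with the smallest-index good reviewer (same value, weakly smaller cost) and then simply asserts that iterating ``will eventually give us the greedy solution.'' What you add in Stage~2 is precisely the part the paper leaves implicit: after normalizing to prefix form $E=\{1,\dots,e\}$, $G=\{e+1,\dots,e+g\}$, you show the prefix value $V(e)=\alpha e+g(e)$ is non-decreasing up to the greedy cutoff $k$, via the exchange ``upgrade agent $e+1$ to excellent and delete the $\alpha-1$ most expensive good reviews,'' with the sorted order guaranteeing each deletion frees at least $s_{e+1}$. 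That trade is essentially the one the paper deploys only later, in the proof of \Cref{lemma:greedy}, to show an optimum has at most $\alpha-1$ good reviews; you fold it into a clean induction that genuinely establishes optimality of the full greedy prefix, so your write-up is more self-contained and rigorous, at the cost of being longer. Two small notes: counting ``$\alpha-1$ deleted reviews'' implicitly treats $\alpha$ as an integer, but the paper makes the same assumption in its own exchange; and in Case~1 the good reviews lie on indices $>e$ (it is the \emph{deleted} ones that have index $\ge e+2$), a phrasing slip that does not affect the budget accounting since each deleted review still costs at least $s_{e+1}$.
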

\begin{proof}
 Assume that sets $E^\star$ and $G^\star$ are the sets of agents that write excellent and good reviews at the optimal solution respectively. Let $j'= \min_i G$ and $i'= \max_i E$. Now suppose that $j'< i'$. Then, this would imply that by exchanging $i'$ and $j'$, a solution of the same value would be produced, but with a lower cost, as 

\begin{align*}
\alpha\cdot \sum_{i\in E^\star \setminus \{i'\}} s_i + &\alpha \cdot s_{i'} + \sum_{i \in G^\star \setminus \{j'\}} s_i - s_{j'}\\
&\leq
\alpha\cdot \sum_{i\in E^\star\setminus \{i'\}} s_i + \alpha \cdot s_{j'} + \sum_{i \in G^\star\setminus \{j'\}} s_i - s_{i'}.
\end{align*}
Repeated applications of the same procedure, either will lead to a contradiction (as the cost of the solution decreases each time), or will eventually give us the greedy solution.
\end{proof}

\begin{remark}
\normalfont {Notice that the same algorithm computes the optimal outcome even if there is a constraint $T$ by following the same arguments.}
\end{remark}

We can now use the previous lemma, in order to upper bound the value of the optimal solution, when the constraint $T$ is not present.

\begin{lemma}\label{lemma:greedy}
For any set of skills $\{s_1, \ldots, s_n\}$, the value of the optimal solution is at most $\alpha \cdot (i^\star + 1)$,
where $i^\star$ is the largest index such that $$\sum_{i=1}^{i^\star} \alpha \cdot s_i \le \beta.$$
\end{lemma}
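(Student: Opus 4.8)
The plan is to invoke \Cref{lem:greedy}, which tells us that the optimal outcome coincides with the value produced by the greedy procedure: assign excellent reviews to a prefix $1,2,\dots$ of the (sorted) agents for as long as the budget permits, and then spend the leftover budget on good reviews for the cheapest agents that remain. By the very definition of $i^\star$ as the largest index with $\sum_{i=1}^{i^\star}\alpha s_i\le\beta$, the ``excellent'' phase of the greedy stops precisely after agents $1,\dots,i^\star$. Hence the greedy solution contains exactly $i^\star$ excellent reviews, contributing $\alpha\cdot i^\star$ to the objective, plus some number $g$ of good reviews, and the claimed bound $\alpha\cdot(i^\star+1)$ reduces to showing $g\le\alpha$.

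To bound $g$, I would first write the leftover budget as $R:=\beta-\sum_{i=1}^{i^\star}\alpha s_i\ge 0$. If $i^\star=n$ there are no agents left to receive good reviews, so $g=0$ and the statement holds trivially; otherwise maximality of $i^\star$ gives $\sum_{i=1}^{i^\star+1}\alpha s_i>\beta$, i.e. $R<\alpha s_{i^\star+1}$. The $g$ good reviews are assigned to agents $i^\star+1,\dots,i^\star+g$, and since the agents are indexed so that $s_{i^\star+1}\le s_{i^\star+2}\le\cdots$, each such review costs at least $s_{i^\star+1}$. Combining these observations with the fact that the total cost of the good reviews cannot exceed the leftover budget, we get $g\cdot s_{i^\star+1}\le\sum_{i=i^\star+1}^{i^\star+g}s_i\le R<\alpha s_{i^\star+1}$, hence $g<\alpha$, and in particular $g\le\alpha$.

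Putting the two pieces together, the value of the optimal solution equals $\alpha\cdot i^\star+g<\alpha\cdot i^\star+\alpha=\alpha\cdot(i^\star+1)$, which is even slightly stronger than what is claimed. The only points needing a bit of care are (i) that the ``good'' phase of the greedy genuinely uses the cheapest available agents, so that every leftover cost is at least $s_{i^\star+1}$ --- this is exactly what the exchange argument underlying \Cref{lem:greedy} guarantees --- and (ii) the degenerate cases $i^\star=n$ (no room for good reviews) and $i^\star=0$ (no room for excellent reviews, handled by the same inequality with the empty prefix); neither of these is a real obstacle, so the argument is essentially a short bookkeeping computation on top of \Cref{lem:greedy}.
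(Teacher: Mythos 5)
Your argument is correct, and it shares the paper's overall strategy --- both proofs first invoke \Cref{lem:greedy} to replace the optimal solution by the greedy one, observe that the excellent phase is exactly the prefix $\{1,\dots,i^\star\}$, and then bound the number $g$ of good reviews by fewer than $\alpha$ --- but the key counting step is done differently. The paper bounds $g$ by an exchange argument: if $g \ge \alpha$, converting the cheapest good review to an excellent one while deleting $\alpha-1$ further good reviews yields a solution of the same value but lower cost, contradicting the way the greedy operates. You instead use the maximality of $i^\star$ directly: the leftover budget satisfies $R < \alpha\, s_{i^\star+1}$ (when $i^\star < n$), every remaining agent has skill at least $s_{i^\star+1} > 0$, and budget feasibility of the good phase gives $g\, s_{i^\star+1} \le R < \alpha\, s_{i^\star+1}$, hence $g < \alpha$. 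Your route is slightly more elementary, avoids the somewhat delicate ``contradiction with the greedy'' step, and yields the marginally stronger strict bound $\alpha\cdot i^\star + g < \alpha\cdot(i^\star+1)$; the paper's exchange argument has the advantage of exposing the structural fact (at most $\alpha-1$ good reviews in the greedy optimum) without reference to $i^\star$, which is the form reused later. Your handling of the degenerate cases $i^\star=n$ and $i^\star=0$ (the latter matching \Cref{rem:zo}) is also fine; just make sure to state explicitly, as you do in point (i), that the good reviews are assigned only to agents of index greater than $i^\star$ (each agent writes at most one review), which is what licenses the per-review lower bound $s_{i^\star+1}$.
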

\begin{proof}
Initially notice that the value of optimal solution is always the same, regardless of the way it is computed. Consider the optimal solution produced by the algorithm described above, and let $E^g$ and $G^g$ to be the sets of agents that write excellent and good reviews in this solution respectively. Now suppose that $|G^g| \ge \alpha$ and let $j' = \min_i G^g$. Then, the sets $E^g \cup \{j'\}$ and $G^g\setminus\{j', j'+1, \ldots, j' + \alpha - 1\}$ represent a solution with the same value but lower cost as:
\begin{align*}
\alpha\cdot \sum_{i\in E^g} s_i + \alpha s_{j'} + &\sum_{i \in G^g} s_i - s_{j'} - s_{j'+1} - \ldots s_{j'+\alpha-1}\\
&=\alpha \cdot \sum_{i\in E^g} s_i + \sum_{i \in G^g} s_i - \alpha \cdot s_{j'} - \sum_{i=j'}^{j'+\alpha-1}s_i\\
&\le \alpha\cdot \sum_{i\in E^g} s_i + \sum_{i \in G^g} s_i.
\end{align*}
Notice that this is a contradiction due to how the greedy algorithm works. Therefore, there are at most $\alpha - 1$ good reviews in an optimal solution. Thus, we can conclude that the optimal solution produced by the greedy algorithm is of the form $E^g = \{1,2, \ldots, i^\star\}, G^g = \{i+1,i+2, \ldots, i+j\}$, where $j \le \alpha - 1$.

\end{proof}

\begin{remark}\label{rem:zo}
\normalfont Notice that if $i^\star=0$, which means that for every $i$ we have that $\alpha \cdot s_i> \beta$, this implies that the value of the optimal solution is at most $\alpha$, something that is consistent with the statement of \Cref{lemma:greedy}. To see this, observe that if the optimal solution's value was greater than $\alpha$, since it is not possible for any agent to write an excellent review (due to the budget constraint), this value would be attained by agents that write reviews of good quality. The latter would mean that $|G^\star|>\alpha$, and $\sum_{i \in G^\star} s_i \leq \beta$. This implies that $\alpha \cdot \min_{i \in G^\star} s_i \leq \beta$, a contradiction.
\end{remark}

We proceed with the following theorem where we bound the PoA of the modified proportional mechanism, with respect to the quality objective. At several steps throughout the proof, we compare the quality achieved at a pure Nash equilibrium in $\texttt{PNE}(\vecc s, \budget, T)$, with the quality of the optimal solution when constraint $T$ does not exist. To avoid confusion, for any instance, we define \texttt{qOPT} to be the value of optimal solution of the instance when the $T$ constraint is not present, and $\texttt{qOPT}_\texttt{T}$ the value of the optimal solution of the instance when the $T$ constraint exists. Obviously, $\texttt{qOPT}_\texttt{T} \leq \texttt{qOPT}$. Finally, in all of our comparisons that happen throughout the proof that follows, the pure Nash equilibrium is compared to the optimal solution produced by Lemma \ref{lem:greedy}.

\begin{theorem}\label{thm:PoS1}
The PoA of the modified proportional mechanism when there is a single proposal and for every $i \in V$, we have $q_i \in \{0, 1, \alpha\}$, is at most 4 (up to an additive factor of at most $6\cdot \alpha$). 
\end{theorem}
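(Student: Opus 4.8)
The plan is to fix an instance with skills $s_1 \le \cdots \le s_n$ and an arbitrary pure Nash equilibrium $\vecc q^{\star} \in \texttt{PNE}(\vecc s, \budget, T)$, and lower-bound $\texttt{Qual}(\vecc q^{\star})$ in terms of $\texttt{OPT}$, which by \Cref{lemma:greedy} is at most $\alpha(i^{\star}+1)$ where $i^{\star}$ is the largest index with $\sum_{i=1}^{i^{\star}} \alpha s_i \le \beta$. Since $\texttt{OPT}_{\texttt{T}} \le \texttt{OPT}$, it suffices to compare against $\alpha(i^{\star}+1)$. First I would dispose of the degenerate case $i^{\star} = 0$ (Remark \ref{rem:zo}): then $\texttt{OPT} \le \alpha$, which is absorbed into the additive $6\alpha$ slack, so we may assume $i^{\star} \ge 1$. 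The key quantity to control is $k := \alpha|E| + |G|$ at the equilibrium, where $E$, $G$ are the sets writing excellent and good reviews; note $\texttt{Qual}(\vecc q^{\star}) = k$ and the reward-denominator for each proposal contributor is exactly $k$ (with $f(1)=1$, $f(\alpha)=\alpha$). The goal is to show $k \ge i^{\star}/4 - O(\alpha)$, i.e. that at equilibrium roughly a quarter of the ``affordable'' effort mass is actually produced.

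The core of the argument is a deviation/stability analysis. Consider agent $i^{\star}$ (or more generally any agent $i \le i^{\star}$, so $\alpha s_i \le \alpha s_{i^{\star}} \le \beta/\text{something}$; more precisely $s_{i^{\star}} \le \beta/(\alpha i^{\star})$ on average, and since skills are sorted $s_1 \le \cdots \le s_{i^{\star}}$ gives $i^{\star} s_{i^{\star}} \ge \sum_{i\le i^{\star}} s_i$, hence $\alpha s_{i^{\star}}$ could be up to $\beta$). The right agent to test is one with small skill: since $\sum_{i=1}^{i^{\star}}\alpha s_i \le \beta$, at least half of the indices $i \le i^{\star}$ satisfy $\alpha s_i \le 2\beta/i^{\star}$ (a Markov-type averaging), so such an agent is both willing and, if $2\beta/i^{\star}$ is small relative to $T$, able to write a review. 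Suppose such an agent $i$ is currently at quality $0$ (not in $E \cup G$). Her best response includes the option of writing a good review, giving utility $\frac{\beta}{k+1} - s_i$, or an excellent review, giving $\frac{\alpha\beta}{k+\alpha} - \alpha s_i$. Equilibrium forces both to be $\le 0$; the good-review condition gives $\frac{\beta}{k+1} \le s_i \le \frac{2\beta}{\alpha i^{\star}}$, i.e. $k + 1 \ge \frac{\alpha i^{\star}}{2}$, which already beats the target. The subtlety is that this reasoning requires the existence of a low-skill agent \emph{outside} $E\cup G$; if all low-skill agents are already producing reviews, then $|E|+|G|$ is itself large, and since $k \ge |E| + |G|$ we again win. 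Balancing: either at least half of the $i^{\star}/2$ low-skill agents are in $E \cup G$ (so $k \ge i^{\star}/4$), or one of them is outside and the stability inequality gives $k \ge \alpha i^{\star}/2 - 1 \ge i^{\star}/4$ for $\alpha \ge 2$. Either way $\texttt{Qual}(\vecc q^{\star}) = k \ge i^{\star}/4 \ge (\texttt{OPT} - \alpha)/(4\alpha)\cdot \alpha$... — I need to be careful to convert $k \ge i^{\star}/4$ into $\texttt{OPT} \le 4\,\texttt{Qual}(\vecc q^{\star}) + O(\alpha)$ using $\texttt{OPT} \le \alpha(i^{\star}+1) \le \alpha i^{\star} + \alpha$, so I also need the stronger bound $k \ge \alpha i^{\star}/4$ in the "good case"; this is where the $T$-feasibility of the deviation and the precise constants (the factor 2 in the averaging, the $+1$ and $+\alpha$ in denominators) get spent, and where the additive $6\alpha$ comes from.

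The main obstacle I anticipate is the interaction with the time constraint $T$: an agent $i \le i^{\star}$ with $\alpha s_i \le \beta$ may still be unable to upgrade to or write an excellent review because $\alpha s_i > T$, and even a good review costs $s_i$ which must fit in $T$ — but any agent already contributing at equilibrium trivially has a feasible action, and for the deviation argument I only need \emph{one} cheap agent for whom a \emph{good} review (cost $s_i \le 2\beta/(\alpha i^{\star}) \le \beta/i^{\star}$, plausibly $\le T$) is feasible. Handling instances where $T$ is so small that even good reviews by cheap agents are infeasible must be folded in: in that regime $\texttt{OPT}_{\texttt{T}}$ itself collapses (only agents with $s_i \le T$ can do anything), and one compares against $\texttt{OPT}_{\texttt{T}}$ rather than $\texttt{OPT}$ — though since the theorem statement nominally bounds against the worst-case ratio in the PoA definition which uses $\texttt{nsFeasible}$ (hence $T$-constrained) in the numerator, this case is actually favorable. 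I would organize the write-up as: (i) reduce to $i^{\star} \ge 1$; (ii) the averaging step isolating $i^{\star}/2$ cheap agents; (iii) case split on how many are producing; (iv) the one-line stability inequality in the remaining case; (v) collect constants. The arithmetic in (iv)–(v) is routine; the conceptual crux is the case split in (iii) and making the "cheap agent is feasible" claim airtight.
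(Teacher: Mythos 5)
Your plan matches the paper's opening move (compare against $\alpha\cdot(i^\star+1)$ via \Cref{lemma:greedy}, locate a cheap agent among $\{1,\ldots,i^\star\}$ by averaging, and derive a contradiction from the stability of her not writing a good review), but it has two genuine gaps exactly where you yourself flag uncertainty, and these are where most of the paper's proof lives. First, your case split in step (iii) is insufficient: in the branch where the cheap agents are already in $E\cup G$ you only get $k\ge i^\star/4$, while the benchmark is $\alpha\cdot(i^\star+1)$, so the ratio degrades to roughly $4\alpha$. To recover the factor $4$ you need $k\ge \alpha\cdot i^\star/4$, which forces you to analyze the \emph{upgrade} deviation of a cheap agent currently writing a good review: the equilibrium condition $\frac{\beta}{x}-s_{i'}\ge\frac{\alpha\beta}{x-1+\alpha}-\alpha s_{i'}$ must be shown to fail for all $2\le x\le \alpha i^\star/4$ when $s_{i'}<\tfrac{4}{3}\cdot\tfrac{\beta}{\alpha i^\star}$. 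This is not a one-line inequality: the function $\frac{\alpha}{x-1+\alpha}-\frac{1}{x}$ is not monotone (it increases up to $x=\sqrt{\alpha}+1$ and then decreases), so the paper has to exploit its unimodality and check both endpoints, and this is also where the requirement $i^\star\ge 6$ and part of the additive $6\alpha$ slack come from. Your sketch never uses the upgrade deviation at all, so the "stronger bound in the good case" you say you need is simply not established.

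Second, the intermediate time-constraint regime is not "favorable" in the way you assert. If $s_{i'}\le T<\alpha\, s_{i'}$, the cheap agent can write a good review but cannot upgrade, so the $\alpha i^\star/4$ bound is unattainable at equilibrium; yet $\texttt{OPT}_\texttt{T}$ does \emph{not} collapse, because it can still harvest large value from many good reviews (and from excellent reviews of even cheaper agents). Here one must abandon the proxy $\alpha(i^\star+1)$ and compare directly against the structure of $\texttt{OPT}_\texttt{T}$: the paper splits $\texttt{OPT}_\texttt{T}$ into its excellent part $E$ and good part $G$; if $\alpha|E|\ge\texttt{OPT}_\texttt{T}/2$ the equilibrium already contains excellent reviews from all $i<i'\supseteq E$, and if $|G|\ge\texttt{OPT}_\texttt{T}/2$ a separate averaging over the cheaper half of $G$ (using budget feasibility of $\texttt{OPT}_\texttt{T}$, i.e.\ $s_{i''}<2(\beta-\sum_{i\in E}\alpha s_i)/|G|$) shows a non-reviewing member of $G$ would deviate unless $x\ge|G|/2$. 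Your averaging is only over the prefix $\{1,\ldots,i^\star\}$ defined by affordability of \emph{excellent} reviews, which is the wrong set in this regime. (The fully $T$-blocked case $T<s_{i'}$ and the corner cases $x=1$, $i^\star<6$ are easy, as you anticipate, but the two items above are the crux and are missing.)
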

\begin{proof}
Consider a pure Nash equilibrium of quality $x$, and let $ i^*$ be the largest index such that $\sum_{i=1}^{i^\star} \alpha \cdot s_i \le \beta$ \footnote{In case $i^\star=0$, then Remark \ref{rem:zo} applies and we refer the reader to Proposition \ref{prop:PoAzo} for the respective approximation guarantees.}, if we do not take into account the constraint $T$. We begin the proof by considering the following special cases separately:
\begin{itemize}
    \item If $x = 0$, then exactly zero reviews are written at the pure Nash equilibrium. This implies that for every agent $i$, we have $s_i\geq \frac{\beta}{1+\sqrt{\alpha}}>\frac{\beta}{\alpha+\sqrt{\alpha}}$, meaning that it is not beneficial for them to write either a good or an excellent review. Suppose that $\texttt{qOPT}_\texttt{T}>1+\sqrt{\alpha}$ and that $E^g$ and $G^g$ are the sets of agents that write excellent and good reviews in this solution respectively. Then, this implies that:
    
   \begin{align*}
    &|E^g|\cdot \alpha+|G^g|>1+\sqrt{\alpha}\\
    \Rightarrow\quad
    & \frac{|E^g|\cdot \alpha}{1+\sqrt{\alpha}}+\frac{|G^g|}{1+\sqrt{\alpha}}>1\\
     \Rightarrow\quad
    & \frac{\beta \cdot |E^g|\cdot \alpha}{1+\sqrt{\alpha}}+\frac{\beta \cdot |G^g|}{1+\sqrt{\alpha}}>\beta \\
     \Rightarrow\quad
    &  \alpha \cdot \sum_{i\in E^g} s_i+ \sum_{i\in G^g} s_i>\beta,
\end{align*}
    a contradiction. Therefore, the value of $\texttt{qOPT}_\texttt{T}$ is at most $1+ \sqrt{\alpha}<6\cdot \alpha$, and thus the approximation guarantee is covered by the additive factor.
    \item If $0<i^\star < 6$, then \Cref{lemma:greedy} implies that $\texttt{qOPT}_\texttt{T} \le \alpha \cdot (i^\star + 1) \le 6\cdot \alpha$, which is covered by the additive factor.
\end{itemize}

For the rest of the proof, we will consider the case where $x \geq 1$ and $i^\star \ge 6$. Initially, let $0\leq d_{\frac{i^*}{4}}<1$ be the fractional part of $i^*/4$. Now assume that the quality of the pure Nash equilibrium is $x<\frac{\alpha \cdot i^\star}{4}-\alpha+\epsilon < \frac{\alpha \cdot i^\star}{4}-\sqrt{\alpha}$\footnote{Notice that for any $i^*\geq 6$ and $\alpha\geq 2$, we have $\frac{\alpha \cdot i^\star}{4}-\alpha+\epsilon\geq \frac{1}{2}\cdot \alpha+\epsilon>1$.}, for some arbitrarily small $0<\epsilon<<\alpha \cdot(1-d_{\frac{i^*}{4}})$, as otherwise, if
$$
x \ge \frac{\alpha \cdot i^\star}{4}-\alpha+\epsilon,
$$
then we know by \Cref{lemma:greedy}, that the optimal quality that can be achieved when $T$ does not exist, is at most $\alpha \cdot (i^\star + 1)$, leading to:
\begin{align*}
    \quad x  \ge \frac{\alpha \cdot i^\star}{4}-\alpha+\epsilon
&\Rightarrow x \ge \frac{\texttt{qOPT}}{4} - \frac{\alpha}{4}-\alpha+\epsilon \\
&\Rightarrow x \ge \frac{\texttt{qOPT}_\texttt{T}}{4} - \frac{5 \cdot \alpha}{4}+\epsilon\\
&\Rightarrow x \ge \frac{\texttt{qOPT}_\texttt{T}}{4} - 6 \cdot \alpha,
\end{align*}
which directly gives us the desired bound. So, now notice that if $x < \frac{\alpha \cdot i^\star}{4}-\alpha$, this implies that at least one agent $i \le \lfloor \frac{i^\star}{4}\rfloor$ does not write an excellent review at this pure Nash equilibrium. The reason for this is that otherwise we would have $x\geq \alpha \cdot \lfloor\frac{i^*}{4}\rfloor =  \alpha \cdot\frac{i^*}{4}-\alpha\cdot d_{\frac{i^*}{4}}=\frac{\alpha \cdot i^\star}{4}-\alpha+\alpha-\alpha\cdot d_{\frac{i^*}{4}}>\frac{\alpha \cdot i^\star}{4}-\alpha+\epsilon$,  a contradiction.  
So, say that $i'$ is the minimum indexed such agent. The latter means that every agent $i<i'$ writes an excellent review at the pure Nash equilibrium that we consider. By \Cref{lemma:greedy}, we know that 
\begin{align}
    \sum_{i=1}^{i^\star} \alpha \cdot s_i \le \beta 
    &\Rightarrow \sum_{i=\lfloor i^\star/4 \rfloor}^{i^\star} \alpha \cdot s_i \leq \beta \notag\\
    &\Rightarrow \alpha \cdot s_{\lfloor i^\star /4 \rfloor} \left(\frac{3}{4}\cdot i^\star\right) < \beta \notag\\
    &\Rightarrow s_{i'} < \frac{4}{3} \cdot \frac{\beta}{\alpha \cdot i^\star}\label{eq:hopt},
\end{align}\footnote{If $\frac{i^*}{4}<2$, then this is the only case where we have $\sum_{i=1}^{i^\star} \alpha \cdot s_i \leq \beta \Rightarrow \alpha \cdot s_{1} \left(\frac{3}{4}\cdot i^\star\right) < \beta$. In all other cases, $\sum_{i=\lfloor i^\star/4 \rfloor}^{i^\star} \alpha \cdot s_i < \beta$.}
where the last two inequalities hold as the $s_i$'s are positive and increasing in $i$. We will break the proof into cases that depend on how much agent $i'$ is affected by the constraint $T$.

\paragraph{Case 1:} $\alpha \cdot s_{i'} \leq T$. We split this case into two sub-cases. In the first, agent $i'$ does not write a review at the pure Nash equilibrium, while in the second, she writes a good review. In either case, we show that deviating to writing an excellent review increases agent $i'$'s utility, leading to a contradiction.
\paragraph{Case 1a:} Suppose that agent $i'$ does not write a review at the pure Nash equilibrium. It is easy to see that in this case, by deviating to writing a review of good quality, she would attain a utility of
\begin{align*}
\frac{\beta }{x+ \sqrt{\alpha}+1}-s_{i'} 
    \ge \frac{\beta }{\frac{\alpha \cdot i^\star}{4} + 1}-s_{i'}
    = \frac{4 \cdot \beta}{\alpha \cdot i^\star + 4}-s_{i'}.
\end{align*}
By (\ref{eq:hopt}), this deviation is beneficial if the following inequality holds:
\begin{align*}
\frac{4\cdot \beta}{\alpha \cdot i^\star + 4} \ge \frac{4}{3} \cdot \frac{\beta}{\alpha \cdot i^\star}\\ 
\Rightarrow
\frac{1}{\alpha \cdot i^\star + 4} \ge \frac{1}{3} \cdot \frac{1}{\alpha \cdot i^\star}\\
\Rightarrow
\alpha \cdot i^\star+ 4 \le 3\cdot \alpha \cdot i^\star\\
\Rightarrow
 2 \le  \alpha \cdot i^\star
\end{align*}
which is true for $\alpha \ge 2$ and $i^\star \ge 6$. Therefore, writing a good review would yield positive utility for agent $i'$, leading to a contradiction.

\paragraph{Case 1b:} Now, assume that agent $i'$ writes a good review at the pure Nash equilibrium. This implies that her utility by deviating to writing an excellent review would be lower:
\begin{equation}\label{eq:weird_function}
\frac{\beta}{x+\sqrt{\alpha}} - s_{i'} \geq \frac{\alpha \cdot \beta }{x+\sqrt{\alpha}-1+\alpha}-\alpha \cdot s_{i'}
\Rightarrow
(\alpha-1)\cdot s_{i'} \geq \beta \cdot\left(\frac{\alpha}{x+\sqrt{\alpha}-1+\alpha}-\frac{1}{x+\sqrt{\alpha}}\right).
\end{equation}
Let's call $f(x) = \frac{\alpha}{x+\sqrt{\alpha}-1+\alpha}-\frac{1}{x+\sqrt{\alpha}}$ the function appearing at the right hand side of inequality (\ref{eq:weird_function}), and set $y=x+\sqrt{\alpha}$. By studying the monotonicity of function for $\alpha \geq 2$, and $x \geq 1$, we have the following:
\begin{align*}
    &-\frac{\alpha}{(y-1+\alpha)^2}+\frac{1}{y^2}=0\\
    \Rightarrow\quad
    & (y-1+\alpha)^2=\alpha \cdot y^2\\
     \Rightarrow\quad
    & (\alpha-1) \cdot y^2- 2\cdot y \cdot (\alpha-1)- (\alpha-1)^2=0\\
     \Rightarrow\quad
    & y^2- 2\cdot y- (\alpha-1)=0
\end{align*}
For the solutions of this equation, we have that $y_{1}=1+\sqrt{\alpha}$ and $y_{2}=1-\sqrt{\alpha}$, and therefore, $x_{1}=1$ and $x_{2}=1-2\cdot \sqrt{\alpha}$. Finally, it is easy to see that for $x\geq x_1=1$ function $f(x)$ is decreasing. The latter guaranties that if $1\leq x\leq \frac{\alpha \cdot i^\star}{4}-\sqrt{\alpha}$, which is the interval that we are currently interested in, then $x=\frac{a\cdot i^\star}{4}-\sqrt{\alpha}$ is where $f(\cdot)$ is minimized. Going back to equation (\ref{eq:weird_function}), and using the monotonicity of function $f(\cdot)$ and equation (\ref{eq:hopt}), it should also hold that:
\begin{align*}
    &(\alpha-1)\cdot \frac{4}{3} \cdot \frac{\beta}{\alpha \cdot i^\star} 
    \geq 
    \beta \cdot  \left(\frac{\alpha}{x+\sqrt{\alpha}-1+\alpha}-\frac{1}{x+\sqrt{\alpha}}\right)\\
    \Rightarrow\qquad
    &(\alpha-1)\cdot \frac{4}{3} \cdot \frac{1}{\alpha \cdot i^\star} \geq \frac{\alpha}{\frac{\alpha \cdot i^\star}{4}-1+\alpha}-\frac{1}{\frac{a\cdot i^\star}{4}} = f\left(\frac{a\cdot i^\star}{4}-\sqrt{\alpha}\right),   
\end{align*}
substituting $x$ for $\alpha \cdot i^\star / 4-\sqrt{\alpha}$. We will show, that this is not possible, and since it is not possible for $x=\frac{a\cdot i^\star}{4}-\sqrt{\alpha}$, it cannot be possible for any $1\leq x\leq \frac{a\cdot i^\star}{4}-\sqrt{\alpha}$, as $f(\cdot)$ is decreasing in this interval. Simplifying the above inequality, we have the following:

\begin{align*}
    & \frac{1}{3} \cdot \frac{\alpha-1}{\alpha \cdot i^\star} \geq \frac{\alpha}{\alpha \cdot i^\star-4+4\cdot \alpha}-\frac{1}{\alpha\cdot i^\star}\\
    \Rightarrow\quad
    & \frac{\alpha+2}{ 3\cdot i^\star} \geq \frac{\alpha}{\alpha \cdot i^\star-4+4\cdot \alpha}\\ 
    \Rightarrow\quad
    & 3\cdot \alpha^2 \cdot i^\star \leq (\alpha+2)(\alpha \cdot i^\star-4+4\cdot \alpha) \\
    \Rightarrow\quad
    & 3\cdot \alpha^2 \cdot i^\star \leq  \alpha^2 \cdot i^\star-4\cdot \alpha+4\cdot \alpha^2+2\cdot \alpha \cdot i^\star-8+8\cdot \alpha  \\
    \Rightarrow\quad
    & 2\cdot \alpha^2 \cdot i^\star \leq 4\cdot \alpha^2+2\cdot \alpha \cdot i^\star-8+4\cdot \alpha  \\
    \Rightarrow\quad
     & \alpha^2 \cdot i^\star \leq 2\cdot \alpha^2+\alpha \cdot i^\star-4+2\cdot \alpha  \\
    \Rightarrow\quad
    & i^*(\alpha^2-\alpha)+4 \leq 2\cdot (\alpha^2+ \alpha),
\end{align*}
which never holds when $i^* \geq 6$ and $\alpha \geq 2$, therefore we end up to a contradiction.


\paragraph{Case 2:} $s_{i'}\leq T<\alpha \cdot s_{i'}$. We split the agents into $E$ and $G$, based on if they write an excellent or good review respectively at $\texttt{qOPT}_\texttt{T}$. Notice set $E$ contains only agents $i<i'$. 
\paragraph{Case 2a:} $\alpha \cdot |E| \geq \frac{\texttt{qOPT}_\texttt{T}}{2}$. By the definition of $i'$ we know that every $i<i'$ writes an excellent review at the pure Nash equilibrium that we consider. The latter implies a implies a half-approximation of $\texttt{qOPT}_\texttt{T}$. In case $i'=1$, then again, by the definition of $i'$ we get that $E=\emptyset$. Due to the case that currently consider, this means that the value of $\texttt{qOPT}_\texttt{T}$ is $0$. 
\paragraph{Case 2b:} $|G| \geq \frac{\texttt{qOPT}_\texttt{T}}{2}$. Assume that at the pure Nash equilibrium, a total quality of $x<\frac{|G|}{2}$ is achieved. Let $G_1 \subseteq G$ to contain half of the agents in $G$, and in particular the ones with the smallest indices (thus, the ones with the smallest $s_i$'s). Notice that at least some $i'' \in G_1$ does not write either a good review or an excellent at the pure Nash equilibrium, as otherwise we would have that $x\geq \frac{|G|}{2}$. Let $i^l = \max_i G_1$. Since $\sum_{i \in E}\alpha \cdot s_i+ \sum_{i \in G}s_i \leq \beta $, it is easy to see that 

\begin{equation}\label{eq:poutsa}
s_{i''}\leq s_{i^l} < \frac{2\cdot(\beta - \sum_{i \in E} \alpha \cdot s_i)}{|G|},
\end{equation}
as otherwise the sum of the skills of the agents in $G \setminus G_1$, would exceed the available budget. So the only thing that remains to show, is that agent $i''$ is better off by writing a good review, something that will lead to contradiction. The latter, is true as
\begin{align}
u_{i''}=  \frac{\beta}{x+\sqrt{\alpha}+1} -s_{i''} 
&\geq \frac{\beta}{\frac{\texttt{qOPT}}{4}-\sqrt{ \alpha} +\sqrt{\alpha}} -s_{i''} \\ 
&\geq \frac{2 \cdot \beta}{|G|}- s_{i''}> 0,
\end{align}
where last inequality holds from Equation (\ref{eq:poutsa}).

\paragraph{Case 3: $T< s_{i'}$:} Once again, we split the agents into $E$ and $G$, based on if they write an excellent or good review respectively at $\texttt{qOPT}_\texttt{T}$.  Now notice that no agent $i\geq i'$ writes a review at $\texttt{qOPT}_\texttt{T}$ due to constraint $T$, while by the definition of $i'$, every agent $i<i'$ writes an excellent review at the pure Nash equilibrium that we consider. The latter implies an 1-approximation of $\texttt{qOPT}_\texttt{T}$. In case $i'=1$, then for every agent $i \neq i'$ we have that $s_i \geq s_i^{'}$. Therefore, we get that the value of $\texttt{qOPT}_\texttt{T}$ is $0$. 

\end{proof}




\begin{remark}
\normalfont Notice that the additive factor that appears in the approximation guarantee of the pure Nash equilibrium, captures some corner cases where the appearance of parameter $\alpha$ cannot be avoided. An easy example that demonstrates this is the following: Consider an instance where there is only one agent with skill $s_i=\frac{1}{\sqrt{\alpha}}$, $B=1$, and $T=1+\epsilon$ for some $\epsilon>0$ and $\alpha \geq 4$. The utility of this agent when she writes a good review is $\frac{1}{1+\sqrt{\alpha}}-\frac{1}{\alpha}>0$, which is more than $\frac{\alpha}{\alpha+\sqrt{\alpha}}-\frac{\alpha}{\alpha}<0$, her utility when she writes an excellent review. The first state describes a pure Nash equilibrium of quality 1, while the second state describes the optimal solution of quality $\alpha$. Something that is also interesting is that although it seems initially that is a side effect of the modification (the fact that $\sqrt{\alpha}$ is added to the denominator), it also cannot be avoided in the original proportional mechanism \cite{BirmpasKLO22}.
\end{remark}

For the remainder of the paper we revert back to the study of the original version of the proportional mechanism, as for the scenarios that will be studied from now on, the modification is no longer needed to guarantee equilibrium existence. 

\subsection{The Special Case of 0 and 1 Qualities}

We would like to point out that if one considers the special version of this case of the problem, where the available strategies for the agents are writing reviews of qualities 0 and 1 \footnote{A more general version of this case, where there are more that one proposals, is presented in the next section.}, then the existence of a pure Nash equilibrium under the original version of the proportional mechanism is always guaranteed, and one can get a tight PoA bound of 2. Regarding the existence of pure Nash equilibria for this case, it can directly be derived by the first paragraph of the proof of Theorem \ref{thm:pne} \footnote{For an alternative proof, we refer the reader to Theorem \ref{thm:pot}.}. It is also clear that the optimal quality in this version of the problem can be achieved by letting agents $\{1,\ldots, i^\mu\}$ to write a good review, where $i^\mu$ is the largest index for which $\sum_{i=1}^{i^\mu}s_i\leq\beta$, and $s_i^\mu \leq T$. The latter implies that the quality of optimal outcome is $i^\mu$.

\begin{proposition}\label{prop:PoAzo}
The PoA of the proportional mechanism when there is a single proposal and for every $i \in V$, we have $q_i \in \{0, 1\}$, is 2 (up to an additive factor of 1). Moreover, this is tight.
\end{proposition}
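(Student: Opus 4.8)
The plan is to prove the matching bounds $\texttt{PoA}\le 2$ and $\texttt{PoA}\ge 2$ separately. For the upper bound I will show that \emph{every} pure Nash equilibrium containing at least one good review is already a $2$-approximation of $\texttt{OPT}$, and that such an equilibrium exists whenever $\texttt{OPT}\ge 1$ (this is exactly the construction in the first paragraph of the proof of \Cref{thm:pne}); since, as defined, the relevant quantity compares $\texttt{OPT}$ with the \emph{largest} equilibrium value, this is enough. For the lower bound I will exhibit a two-agent instance whose only equilibrium has half the optimal quality.

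For the upper bound, fix an instance and an equilibrium in which a set $W$ of $x'\ge 1$ agents write a good review, so each of them is paid $\beta/x'$. Two best-response conditions hold: (a) every $i\in W$ has $s_i\le\beta/x'$, since otherwise abstaining is profitable; and (b) every agent $i\notin W$ with $s_i\le T$ has $s_i\ge\beta/(x'+1)$, since otherwise joining --- which produces $x'+1$ writers, each paid $\beta/(x'+1)$ --- is profitable. Take any optimal non-strategic solution; by the discussion preceding the statement it uses a set $S$ of $\texttt{OPT}$ agents with $\sum_{i\in S}s_i\le\beta$ and $s_i\le T$ for every $i\in S$. Write $S=A\cup B$ with $A=S\cap W$ and $B=S\setminus W$; then $|A|\le x'$, while each $i\in B$ has $s_i\le T$ and $i\notin W$, so (b) gives $\beta\ge\sum_{i\in S}s_i\ge\sum_{i\in B}s_i\ge |B|\,\beta/(x'+1)$, i.e.\ $|B|\le x'+1$. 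If $|A|\le x'-1$ this already gives $\texttt{OPT}=|A|+|B|\le 2x'$. If instead $|A|=x'$, then $A=W$ and $\sum_{i\in A}s_i>0$ (the skills are strictly positive), so the above chain is strict, forcing $|B|\le x'$ and again $\texttt{OPT}\le 2x'$. Since any equilibrium with $x'$ writers is itself a time- and budget-feasible non-strategic solution of value $x'$, we also have $\texttt{OPT}\ge x'$; combining, for every instance the largest equilibrium value lies between $\texttt{OPT}/2$ and $\texttt{OPT}$, which is $\texttt{PoA}\le 2$.

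For tightness, take a single proposal, two agents, $B=\beta=1$, $T=1$, and skills $s_1=\tfrac18$, $s_2=\tfrac58$. Letting both agents write a good review is non-strategically feasible --- the total cost is $\tfrac18+\tfrac58=\tfrac34\le 1$ and each individual cost is $\le T$ --- so $\texttt{OPT}=2$. The only pure Nash equilibrium is the state in which agent $1$ alone writes: there her utility is $1-\tfrac18>0$, whereas agent $2$ joining would give her $\tfrac12-\tfrac58<0$; and the other three states are not equilibria (if nobody writes or only agent $2$ writes, agent $1$ strictly gains by writing; if both write, agent $2$ strictly gains by stopping). Hence the best equilibrium has quality $1$ and the ratio is $2$. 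The same construction scales: $x$ agents of skill $\approx 0$ together with $x$ agents of skill slightly above $\beta/(x+1)$ give $\texttt{OPT}=2x$ and a unique equilibrium of value $x$.

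The only genuinely delicate step is upgrading the easy estimate $\texttt{OPT}\le 2x'+1$ coming directly from (b) to the claimed $\texttt{OPT}\le 2x'$; this is precisely the purpose of the case split $|A|=x'$ versus $|A|\le x'-1$, and it uses that skills are strictly positive. The rest is bookkeeping once (a)--(b) are written down, with the time bound $T$ entering only to certify that the agents in $B$ \emph{could} have written a review, so that their abstention is genuinely governed by condition (b).
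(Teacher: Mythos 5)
Your proof is correct, but it reaches the upper bound by a genuinely different route than the paper. The paper first characterizes the optimum structurally -- after sorting skills it shows $\texttt{OPT}=i^\mu$, the longest affordable prefix -- and then argues by contradiction: in any equilibrium of quality $x<i^\mu/2$ some idle agent $i'\le i^\mu/2$ is cheap enough that joining (reward $\beta/(x+1)$) is strictly profitable. You never sort or characterize the optimum; you take an arbitrary optimal set $S$, split it into equilibrium writers $A$ and non-writers $B$, and combine the budget constraint with the no-join condition $s_i\ge\beta/(x'+1)$ to get $|B|\le x'+1$, with the case analysis on $|A|$ (using strict positivity of skills) shaving the $+1$. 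Your version buys a cleaner, sorting-free argument that sidesteps divisibility issues with $i^\mu/2$ and the slipped constant in the paper's displayed bound (their chain actually yields $s_{i'}<2\beta/i^\mu$, not $\tfrac12\cdot\beta/i^\mu$, though the intended deviation argument still goes through); the paper's version buys the explicit prefix description of the optimum, which it also uses in the tightness discussion and mirrors \Cref{thm:PoS1}. One remark on the corner case: you dispatch the review-less equilibrium by invoking the literal definition (the $\min$ over equilibria picks the \emph{best} one), whereas the paper's prose intends the worst equilibrium and handles the degenerate case in a footnote. In fact your argument already bounds \emph{every} equilibrium with at least one review, and a review-less profile can be an equilibrium only when $\texttt{OPT}\le 1$ (an agent exactly indifferent, $s_i=\beta$), which is exactly the corner case the paper writes off as an additive loss of $1$; stating this explicitly would make your bound match the paper's worst-case claim without leaning on the best-equilibrium reading. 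Your tightness instance ($s_1=\tfrac18$, $s_2=\tfrac58$) is essentially the paper's ($0.4$, $0.6$).
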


\begin{proof}
Consider a pure Nash equilibrium and suppose for contradiction that its quality is $x <\frac{i^\mu}{2}$. We will split the proof in three cases:
\begin{itemize}
    \item If $\frac{i^\mu}{2}$ is an integer, then this implies that at least one agent $i \le \frac{i^\mu}{2}$ does not write a review at this pure Nash equilibrium. Say that $i'$ is the minimum indexed such agent. From the previous discussion, we know that 
\begin{align}
    \sum_{i=1}^{i^\mu} s_i \le \beta 
    &\Rightarrow \sum_{i=i^\mu/2}^{i^\mu} s_i < \beta \notag\\
    &\Rightarrow s_{i^\mu /2} \cdot \frac{i^\mu}{2} < \beta \notag\\
    &\Rightarrow s_{i'} < \frac{\beta}{\frac{i^\mu}{2}}\label{eq:hopt2},
\end{align}
where the last two inequalities hold as the $s_i$'s are positive and increasing in $i$.

Since agent $i'$ does not write a review at the pure Nash equilibrium, it is easy to see that in this case by deviating to writing a review of good quality, she would attain a utility of
\begin{align*}
\frac{\beta }{x+1}-s_{i'} 
    \ge \frac{\beta }{\frac{i^\mu}{2}}-s_{i'}.
\end{align*}
By (\ref{eq:hopt2}), this deviation is always beneficial. 
Therefore, writing a good review would yield positive utility for agent $i'$, leading to a contradiction.

\item If $\frac{i^\mu}{2}\geq 1$ and it is not an integer, then this implies that at least one agent $i \le \lfloor\frac{i^\mu}{2}\rfloor+1$ does not write a review at this pure Nash equilibrium. Say that $i'$ is the minimum indexed such agent. From the previous discussion, we know that 
\begin{align}
    \sum_{i=1}^{i^\mu} s_i \le \beta 
    &\Rightarrow \sum_{i=\lfloor i^\mu/2 \rfloor +1}^{i^\mu} s_i < \beta \notag\\
    &\Rightarrow s_{\lfloor i^\mu /2 \rfloor+1} \cdot (\lfloor\frac{i^\mu}{2}\rfloor+1) < \beta \notag\\
    &\Rightarrow s_{i'} < \frac{\beta}{\lfloor\frac{i^\mu}{2}\rfloor+1}\label{eq:hopt3},
\end{align}
where the last two inequalities hold as the $s_i$'s are positive and increasing in $i$.

Since agent $i'$ does not write a review at the pure Nash equilibrium, it is easy to see that in this case that by deviating to writing a review of good quality, she would attain a utility of
\begin{align*}
\frac{\beta }{x+1}-s_{i'} 
    \ge \frac{\beta }{\lfloor\frac{i^\mu}{2}\rfloor+1}-s_{i'}.
\end{align*}
By (\ref{eq:hopt3}), this deviation is always beneficial. 
Therefore, writing a good review would yield positive utility for agent $i'$, leading to a contradiction.

\item If $\frac{i^\mu}{2}<1$ and is it is not an integer, then this implies that $i^\mu=1$ and that the optimal solution provides a total quality of 1. It is straightforward to see that any pure Nash equilibrium provides a total quality of either 1 or 0. The latter case captures the scenario where an agent is indifferent between writing a review or not, as in both cases her utility is 0. Although this implies an infinite PoA if we measure the performance of the equilibrium according to the definition, as this is the only case that something like that can happen, we view it as having an additive loss of 1.

\end{itemize}

Regarding tightness, consider an instance with 2 agents, skills $s_1=0.4$, and $s_2=0.6$, and finally $B=T=1$. It is easy to see that both of them participate in the optimal outcome, which has a total quality of 2, while only one of the can write a review at a pure Nash equilibrium, since otherwise agent 2 ends up with a negative utility as her reward is $\frac{1}{2}$ in that case. This implies a quality of 1 in any pure Nash equilibrium.
\end{proof}

\section{Multiple Proposals: The Coverage Objective}\label{sec:mprop}

In this section we go beyond the case of the single proposal, and we turn our attention on the performance of the proportional mechanism with respect to the coverage objective. Specifically, we are interested in the number of proposals that end up with at least one review at an equilibrium state of the proportional mechanism, in comparison with the respective optimal solution. Since our priority is to maximize the number of proposals that are covered  with at least one review, the evaluation of the quality of these reviews takes the back seat. This leads to a version of the problem where $q_{ij} \in \{0,1\}$, i.e., an agent $i$, either writes a review for a proposal $j$, or she doesn't, while for the effort function $f(\cdot)$ we have,

\[	f(q_i)=  
	\begin{cases}
	0, & \text{if $q_i=0$} \\
	1, & \text{if $q_i=1$} 
	\end{cases} 
\]
for every $i \in V$. We begin by exploring the existence of pure Nash equilibria under this setting.

\subsection{Multiple Proposals: Existence of Pure Nash Equilibria}

We begin with the following theorem.

\begin{theorem}\label{thm:pot}
When there are multiple proposals and $q_{ij} \in \{0, 1\}$ for every $i, j$, then the proportional mechanism always has a pure Nash equilibrium.
\end{theorem}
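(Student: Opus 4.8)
The plan is to imitate the best-response dynamics used in the proof of \Cref{thm:pne}, but now run in parallel across all $m$ proposals. Observe first that, because $q_{ij}\in\{0,1\}$ and $f$ is $0/1$-valued, each proposal behaves like an independent single-proposal instance of the $\{0,1\}$ type analyzed in \Cref{prop:PoAzo}, except that the agents are coupled through the maximum-effort constraint \eqref{def:max-effort}: an agent may review at most $\lfloor T\rfloor$ proposals at once (since every review costs $f(1)\cdot s_{ij}=s_{ij}$, so the constraint is $\sum_{j:q_{ij}=1}s_{ij}\le T$, which caps the number of reviews agent $i$ can simultaneously hold). The reward agent $i$ gets for reviewing proposal $j$ is $\beta / |\{k : q_{kj}=1\}|$, which depends only on how many agents are reviewing $j$, not on who they are. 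This is exactly the structure of a \emph{congestion/potential game with capacities}, so the natural approach is to exhibit a potential function.

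I would proceed as follows. First, I would set up the candidate potential $\Phi(\vecc q)=\sum_{j\in P}\sum_{c=1}^{n_j(\vecc q)}\frac{\beta}{c}-\sum_{i\in V}\sum_{j\in P}s_{ij}q_{ij}$, where $n_j(\vecc q)=|\{k:q_{kj}=1\}|$ is the number of reviews on proposal $j$ (a Rosenthal-type potential: each proposal contributes the harmonic-like sum of the per-slot rewards, and we subtract the total effort cost). Second, I would verify that any unilateral improving move by a single agent $i$ strictly increases $\Phi$: when agent $i$ adds a review to proposal $j$ (raising $n_j$ from $c-1$ to $c$), her utility change is $\frac{\beta}{c}-s_{ij}$ and $\Phi$ changes by exactly the same amount; when she drops a review from $j$ (lowering $n_j$ from $c$ to $c-1$), her gain is $s_{ij}-\frac{\beta}{c}$ and again $\Phi$ changes identically; a "swap" (drop $j$, add $j'$) is just the composition of the two, so its effect on $\Phi$ equals its effect on $u_i$. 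Hence $\Phi$ is an exact potential for the proportional mechanism in this setting. Third, since the strategy space is finite, best-response dynamics started from any feasible configuration (e.g. the empty configuration, where every agent writes no reviews — trivially feasible) must terminate at a configuration where no agent has an improving feasible deviation, i.e. a pure Nash equilibrium; and feasibility is preserved along the way because we only ever consider deviations $\vecc q_i'$ that respect \eqref{def:max-effort}.

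The main obstacle is handling the maximum-effort constraint \eqref{def:max-effort} cleanly. The potential-function argument shows that \emph{unconstrained} best-response dynamics converge, but here each agent's deviation must stay inside \texttt{Feasible}, and a priori an improving move of the form "drop review on $j$, add review on $j'$" (which can be feasible even when "just add $j'$" is not) could in principle cycle even though each such move raises $\Phi$ — except that it cannot, precisely because $\Phi$ strictly increases on every improving move (feasible or not) and the state space is finite. So the real content is just to argue that restricting the allowed deviations to the feasible ones does not break the potential argument: any sequence of feasible improving deviations is still a sequence of improving deviations, hence strictly increases $\Phi$, hence is finite, and its terminal state has no feasible improving deviation, which is exactly the definition of a pure Nash equilibrium in $\texttt{PNE}(\vecc s,\budget,T)$. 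I would also remark that this gives an alternative, cleaner proof of the $\{0,1\}$ single-proposal existence claim used implicitly after \Cref{thm:pne}, which the excerpt already flags in a footnote.
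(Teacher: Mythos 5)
Your proof is correct and takes essentially the same route as the paper: your Rosenthal-type potential $\sum_{j\in P}\sum_{c=1}^{n_j}\frac{\beta}{c}-\sum_{i,j}s_{ij}q_{ij}$ is exactly the paper's per-proposal potential $\beta\cdot H_{n_j}-\sum_{i:q_{ij}=1}s_{ij}$ summed over proposals, and the exact-potential/finite-game conclusion is the same (the paper simply invokes the standard potential-game existence result, while you additionally spell out why restricting to feasible deviations under the effort constraint does not break the argument). One cosmetic slip worth fixing: the constraint bounds the total cost $\sum_{j:q_{ij}=1}s_{ij}\le T$, not the number of reviews by $\lfloor T\rfloor$ unless all $s_{ij}\ge 1$, but nothing in your argument relies on that remark.
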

\begin{proof}
 Our goal is to prove that the procedure of writing reviews under the proportional reward scheme, can be seen as a game that admits a \emph{potential} function. We start by associating each of the proposals $P_j$, with a function $\Phi_j(\cdot): \vecc q^j \rightarrow \mathbb{R}$. More specifically, we define $\Phi_j(\cdot)$ as follows:

$$
\Phi_j(\vecc q^j)= \beta \cdot H_{k}-\sum_{i \in K} s_{ij},
$$
where $K\subseteq V$, with $|K|=k$, is the set of agents for which we have $q^j_i=1$ in $\vecc q^j$, and $H_{k}$ is the \emph{$k$-th Harmonic number} defined as $H_{k} = 1 + \frac{1}{2} + \dots + \frac{1}{k}$.

Now consider the strategy vectors $\vecc q^j= (q^j_i,q^j_{-i})$, and $\vecc {\hat{q}}^j= (\hat{q}^j_i,q^j_{-i})$, where agent $i$ changes her strategy from $q^j_i$ to $\hat{q}^j_i$, while the strategies of the rest of the agents remain the same. We have that:

\[	\Phi_j(q^j_i,q^j_{-i})-\Phi_j(\hat{q}^j_i,q^j_{-i})=  
	\begin{cases}
	\frac{\beta}{k}-s_{ij}, & \text{if $q^j_i=1$, $\hat{q}^j_i=0$} \\
	s_{ij}-\frac{\beta}{k+1}, & \text{if $q^j_i=0$, $\hat{q}^j_i=1$}  \\
	0, & \text{if $q^j_i=\hat{q}^j_i$}
	\end{cases} 
\]
 which implies that
 \begin{equation*}
 \Phi_j(q^j_i,q^j_{-i})-\Phi_j(\hat{q}^j_i,q^j_{-i})=u_i(q^j_i,q^j_{-i})-u_i(\hat{q}^j_i,q^j_{-i}).
 \end{equation*}
 Observe that this holds for every $P_j$ and $\Phi_j$. 
 
 We proceed by defining function $\Phi(\vecc q)=\sum_{j=1}^m\Phi_j(\vecc q^j)$. Notice that for every reviewer $i$, we have that: 
 
 \begin{align*}
  \Phi(\vecc q_i, \vecc  q_{-i})-\Phi(\vecc{\hat{q}}_i, \vecc q_{-i})
  &= \sum_{j=1}^m [\Phi_j(q^j_i,q^j_{-i})-\Phi_j(\hat{q}^j_i,q^j_{-i})] \\
  &= \sum_{j=1} ^m[u_i(q^j_i,q^j_{-i})-u_i(\hat{q}^j_i,q^j_{-i})] \\
  &= u_i(\vecc q_i, \vecc  q_{-i})-u_i(\vecc{\hat{q}}_i, \vecc q_{-i}).
 \end{align*}
 Therefore, function $\Phi(\cdot)$ is an exact potential function and thus, there always exists at least one pure Nash equilibrium \cite{Sap1996}.
\end{proof}

\subsection{Coverage Approximation Guarantees}

In this section, we explore the performance of the proportional mechanism with respect to the coverage objective. Unfortunately, as the following proposition shows, the guarantees that it provides at a pure Nash equilibrium can be as bad $n$-approximate to the optimal coverage outcome.

\begin{proposition}\label{lem:covapprx}
The PoA obtained by the proportional mechanism  can be as bad as $n$.
\end{proposition}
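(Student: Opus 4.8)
The plan is to construct a single family of instances, parameterized by $n$, in which the optimal (non-strategic) solution covers all $n$ proposals with one review each, while every pure Nash equilibrium of the proportional mechanism covers only a single proposal. The basic tension we exploit is that the proportional reward for a proposal with a single reviewer is the full per-proposal budget $\beta$, whereas if many reviewers pile onto one proposal they split $\beta$; so an agent's incentive is to chase whichever proposal currently has the fewest reviewers. If we make it so that exactly one agent can afford to review each proposal in isolation, but arrange the skills so that the agents would rather crowd together on one proposal than spread out, the equilibrium coverage collapses.

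Concretely, I would take $n$ agents and $n$ proposals and set the skills so that $s_{ij}$ is small (say below $\beta/n$, or even just below $\beta$) for a distinguished proposal $P_1$ and all agents, but for every other proposal $P_j$ with $j\ge 2$, only exist one cheap agent while all the others have $s_{ij}$ prohibitively large (strictly above $\beta$, so that the reward $\beta$ from being a sole reviewer does not cover the cost). In the non-strategic optimum one simply buys the cheap review for each of the $n$ proposals — this is budget-feasible per proposal (each costs at most $\beta$) and respects the time constraint $T$ if we pick parameters generously — giving coverage $n$. In the strategic setting, however, an agent will only ever review $P_j$, $j\ge 2$, if doing so yields nonnegative utility, i.e. if the split reward exceeds her cost; by making the "off-diagonal" costs huge we force every agent except possibly the one designated-cheap agent off those proposals, and then we must check that even the designated cheap agents prefer to abandon their lonely proposal and join the crowd on $P_1$. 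That last check is the heart of the argument: on $P_1$ with $k$ reviewers the marginal reviewer gets $\beta/k$, so we need the cheap agent's cost on $P_1$ to be below $\beta/k$ for the relevant $k$, while her cost on her own proposal is, say, also tiny but the point is the utility on $P_1$ can be made at least as good. I would instead make it cleaner: set every agent's cost to be tiny on $P_1$ and also moderately small on exactly her "own" proposal, but calibrate so that the unique equilibrium has all agents on $P_1$ — this requires showing no agent wants to unilaterally move to an uncovered proposal $P_j$, which holds because moving there as a sole reviewer gives reward $\beta$ minus a cost we have set above $\beta$, hence negative utility, whereas staying on $P_1$ gives nonnegative utility.

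The key steps, in order: (1) define the instance — $n=m$, choose $\budget$, $\timeLimit$, and the skill matrix $\vecc s$ with the diagonal/first-column structure above; (2) exhibit a non-strategic feasible solution of coverage $n$ by checking \Cref{def:max-effort} and \Cref{def:offbudget-feasible} for the "one cheap review per proposal" assignment; (3) characterize the pure Nash equilibria: show that in any equilibrium no proposal $P_j$ with $j \ge 2$ can have a reviewer, because the only way to get positive reward there is to be among its reviewers, and any such agent would have cost exceeding her reward (using that the reward on any proposal is at most $\beta$ and the off-target costs exceed $\beta$); (4) conclude that all reviews go to $P_1$, so coverage is exactly $1$, and verify such a configuration is indeed an equilibrium (agents on $P_1$ have nonnegative utility and no profitable deviation exists, which by \Cref{thm:pot} is guaranteed to be attainable as the game is a potential game, so at least one PNE exists and by the previous step it must have coverage $1$); (5) take the ratio $\texttt{Cov}(\vecc q)/\texttt{Cov}(\vecc q') = n/1 = n$.

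The main obstacle I anticipate is step (3)–(4): making sure the equilibrium is forced to be the "everyone on $P_1$" configuration and that no weird mixed configuration (some agents on their own proposals, some on $P_1$) is also an equilibrium with better coverage — or, if such configurations exist, ensuring the worst equilibrium still has coverage $1$, which is all we need since PoA is a max over equilibria. Actually since PoA takes $\max_{\vecc q'} \min$... wait, re-reading the definition, PoA is $\max_{\vecc s}\max_{\vecc q}\min_{\vecc q'}$, so it suffices that there exists \emph{some} equilibrium with coverage $1$; that makes step (4) easier — I just need one bad equilibrium, whose existence follows by constructing it explicitly and verifying the no-deviation condition, the cost being dominated by the fact that any deviation to proposal $P_j$, $j\ge 2$, costs more than the maximum possible reward $\beta$. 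The only real care needed is arranging the time constraint $T$ and budget so both the optimum and the bad equilibrium are feasible simultaneously, which is a matter of choosing constants generously.
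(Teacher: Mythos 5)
Your overall blueprint is the same as the paper's (one ``magnet'' proposal that every agent is cheap on, all other proposals expensive, optimum covers all $n$, equilibrium collapses onto $P_1$), but the calibration you describe fails exactly at the step you yourself call ``the heart of the argument''. If the non-$P_1$ costs are strictly above $\beta$, the non-strategic benchmark cannot buy those reviews either: \Cref{def:offbudget-feasible} caps the total cost charged to proposal $j$ by $\beta$, so no feasible $\vecc q$ covers $P_j$ and the optimum's coverage is $1$, not $n$. If instead you keep a designated cheap agent per proposal (your diagonal version with a ``tiny''/``moderately small'' own cost), the all-on-$P_1$ profile is not an equilibrium: that agent can drop her $P_1$ review and become the sole reviewer of her own proposal, earning $\beta - s_{jj}$, which dwarfs $\beta/n - \epsilon$. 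Your ``cleaner'' fix asserts both properties at once (own cost affordable for the optimum, yet above $\beta$ to kill the deviation), which is contradictory. The paper resolves this with a knife-edge choice: $\beta = 1$, $T=1$, $s_{i1}=\epsilon$ and $s_{ij}=\beta$ exactly for all $i$ and $j\ge 2$. Then the sole-reviewer deviation yields utility exactly $0 < \beta/n - \epsilon$, while the optimum meets the budget constraint with equality, so coverage $n$ is still feasible non-strategically. Any value in $[\beta(1-1/n)+\epsilon,\ \beta]$ would do, but ``prohibitively large'' does not, and ``tiny'' does not.

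Second, your quantifier reading is backwards relative to the paper's definition: in $\min_{\vecc q'} \texttt{Obj}(\vecc q)/\texttt{Obj}(\vecc q')$ the minimum is attained at the equilibrium with the \emph{largest} coverage, so exhibiting one bad equilibrium is not enough; you must show that every pure Nash equilibrium of the instance has coverage $1$. The paper does exactly this: with $T=1$ each agent can afford only one review, any agent reviewing some $P_j$, $j\ge 2$, has utility at most $0$ and strictly gains by switching to $P_1$ (at least $1/n-\epsilon>0$), and any idle agent strictly gains by joining $P_1$; hence the all-on-$P_1$ profile is the unique equilibrium. Your suggestion to ``choose $T$ generously'' actively hurts here: with a large $T$ and $s_{ij}=\beta$, there are equilibria in which agents additionally cover other proposals at zero marginal utility, so the best equilibrium has coverage well above $1$ and the claimed ratio disappears under the paper's definition. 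The tight time constraint is part of the construction, not a constant to be chosen loosely.
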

\begin{proof}
 Suppose that we have an instance with $n$ agents and $n$ proposals, where the available budget is $B=n$, and the maximum effort constraint is $T=1$. In addition, let every agent $i$ to have a very small skill parameter for the first proposal, i.e., $s_{i1}=\epsilon>0$, and a skill parameter of 1 for the rest of the proposals, i.e., $s_{ij}=1$ for $P_j \in P \setminus P_1$.

Initially, notice that in any feasible $\vecc q$, each agent writes at most one review, as otherwise the maximum effort constraint is violated.  Thus, the optimal solution with respect to the coverage objective is produced when each agent writes one review, and at the same time each proposal has exactly one review. The aforementioned family of assignments guarantees a coverage of $n$.

Let us now turn our attention to the proportional mechanism. Recall that the available budget is distributed equally among the proposals, so for the described instance we have that $\beta=1$ for any $P_j \in P$. Consider the case where every agent writes a review for the first proposal. The derived utility of every agent $i$ is $u_i=\frac{1}{n}-\epsilon$. It is easy to see that this is a pure Nash equilibrium, as every agent can write at most one review, and if any agent tries to deviate to either not writing a review, or writing a review for a different proposal, then her utility becomes zero. Notice that for this specific example, the described assignment is the only possible pure Nash equilibrium, as if there are less than $n$ reviews for the first proposal, this implies that there is an agent $i$ that has not written a review for it. Thus, by playing $q_{i1}=1$ and $q_{ij}=0$ for $j\neq 1$, she achieves a higher utility. Therefore, there is only one pure Nash equilibrium in this instance and the coverage guarantee that it provides is $1$.
\end{proof}

As the result of Proposition \ref{lem:covapprx} is negative, the next natural direction would be to explore the performance of the proportional mechanism under augmented resources. 
Our goal, is to investigate whether by increasing the budget, the proportional mechanism can achieve at its equilibrium states, a total coverage that is identical to the coverage of the optimal (under the original budget) solution. 

\begin{proposition}\label{lem:bigB}
There are instances where the the optimal coverage cannot be achieved at a pure Nash equilibrium, no matter the increase in the budget.
\end{proposition}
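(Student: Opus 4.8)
The plan is to exhibit a single instance where, at \emph{every} pure Nash equilibrium of the proportional mechanism, at least one proposal remains uncovered, while in the optimal (non-strategic) solution all proposals are covered, and crucially this failure persists for arbitrarily large budgets. The natural construction to try is a ``bottleneck'' instance in the spirit of Proposition~\ref{lem:covapprx}: take two proposals $P_1, P_2$ and a collection of agents whose skill vectors make $P_1$ cheap for everyone and $P_2$ expensive for everyone (say $s_{i1}=\eps$ and $s_{i2}$ large, e.g. $s_{i2}=1$), with enough agents that the maximum-effort constraint~\eqref{def:max-effort} still permits a covering assignment (e.g. two agents, one assigned to each proposal, with $T$ set so that one review fits but reviewing $P_2$ eats essentially all of an agent's time). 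The optimal non-strategic solution covers both proposals at cost $\eps + 1 \le \beta$ for a suitable $B$.

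The key step is to argue that no budget increase helps. The point is that increasing $B$ scales $\beta = B/m$ up for \emph{both} proposals simultaneously, which only makes writing a review for $P_1$ \emph{more} attractive, never less; so the ``everyone piles onto $P_1$'' configuration remains an equilibrium, and one must check it is the \emph{only} one. For that I would reuse the argument from Proposition~\ref{lem:covapprx}: in any configuration where $P_2$ is covered by some agent $i$, that agent pays $s_{i2}=1$ but, because $P_2$ is expensive, her reward $\beta/(\text{number of reviewers of }P_2)$ either fails to cover her cost (if the budget is moderate) or — if the budget is huge — then that same huge $\beta$ makes deviating to $P_1$ even more lucrative since the reward on $P_1$ is also $\beta/(\text{reviewers of }P_1)$ and $s_{i1}=\eps \ll s_{i2}$. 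The delicate case is when the budget is large enough that covering $P_2$ is individually rational: here I need the instance tuned (via the number of agents and the relative magnitudes of $\eps$, $s_{i2}$, and $T$) so that whenever $P_2$ is covered, some agent strictly prefers to jump to $P_1$, i.e. $\beta/(k_1+1) - \eps > \beta/k_2 - 1$ where $k_1, k_2$ count reviewers — and since $\beta/(k_1+1)$ and $\beta/k_2$ are comparable while $\eps \ll 1$, this holds for \emph{all} $\beta$. The maximum-effort constraint is what prevents an agent from simply reviewing both, which would otherwise rescue coverage.

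The main obstacle I anticipate is making the ``for all $B$'' quantifier genuinely airtight: one must rule out every pure Nash equilibrium uniformly in $B$, including configurations with many agents on $P_2$ (where each individual reward on $P_2$ is small) and configurations where $P_1$ has few reviewers (where the deviation reward on $P_1$ is large). I would handle this by choosing the instance so that $P_2$ is prohibitively expensive relative to $T$ for all but a vanishing fraction of agents (e.g. $s_{i2} > T$ for every agent except possibly none, or arranging that any agent reviewing $P_2$ is using up effort that, combined with the reward structure, always admits the $P_1$ deviation), thereby forcing $k_2 = 0$ in every equilibrium regardless of budget. A clean way to do this: set $T$ small enough that reviewing $P_2$ is outright infeasible for every agent (violates~\eqref{def:max-effort}), while reviewing $P_1$ costs only $\eps < T$; then $P_2$ can \emph{never} be covered in any feasible strategy profile, hence never at equilibrium, for any $B$ — yet the non-strategic optimum, which under~\eqref{def:offbudget-feasible} faces the same $T$ constraint\ldots so I must instead keep $P_2$ feasible for the optimum but infeasible-at-equilibrium, which is exactly the tension the proof must exploit, and getting that asymmetry right (perhaps by having the optimum use a low-$s_{i2}$ agent that exists but, at equilibrium, always has a profitable deviation to the congested-but-still-better $P_1$) is the crux.
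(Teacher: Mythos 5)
Your construction does not establish the statement, and the gap is exactly at the step you flag as the crux. In a two-proposal instance with one cheap proposal ($s_{i1}=\eps$) and one expensive proposal ($s_{i2}=1$), the ``everyone piles onto $P_1$'' profile does \emph{not} remain an equilibrium as the budget grows: an agent on the congested $P_1$ earns only $\beta/k_1-\eps$, while deviating to be the sole reviewer of the uncovered $P_2$ earns $\beta-1$, and $\beta-1>\beta/k_1-\eps$ as soon as $\beta$ is moderately large. Your key inequality $\beta/(k_1+1)-\eps>\beta/k_2-1$ is simply false for large $\beta$ whenever $k_1+1>k_2$, since the gap $\beta\bigl(\tfrac{1}{k_2}-\tfrac{1}{k_1+1}\bigr)$ grows linearly in $\beta$ and eventually swamps the cost difference $1-\eps$; the rewards are \emph{not} ``comparable'' when the congestion levels differ. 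Concretely, with two agents and your skills, the covering split (one agent per proposal) \emph{is} a pure Nash equilibrium once $\beta\ge 2(1-\eps)$, so in your instance a sufficient budget increase does achieve optimal coverage, which is the opposite of what must be shown. Your fallback of making $P_2$ infeasible under $T$ fails for the reason you yourself note (the non-strategic benchmark obeys the same constraint \eqref{def:max-effort}), so the proposal ends without a working instance.

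The missing idea, which the paper uses, is to make the \emph{deviation reward scale exactly like the equilibrium reward}, so that only the (budget-independent) costs decide. Take two agents and \emph{three} proposals: $s_{i1}=s_{i2}=\eps<\tfrac12$, $s_{i3}=1$, $T=1$, and per-proposal budget $\beta\ge 1$. The optimum covers all three (one agent takes $P_1,P_2$, the other $P_3$). In the profile where both agents review $P_1$ and $P_2$, each earns $2\cdot\tfrac{\beta}{2}-2\eps=\beta-2\eps$; because $T=1$ forbids adding $P_3$ on top, the only relevant deviation is to review $P_3$ alone, yielding $\beta-1<\beta-2\eps$. The $\beta$'s cancel, so this profile (covering only two of three proposals, and in fact the unique equilibrium of the instance) persists for every $\beta\ge1$, which is what makes the ``no matter the increase in the budget'' quantifier go through.
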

\begin{proof}
Suppose that we have 2 agents and 3 proposals. The skill parameters of both the agents are as follows: $s_{i1}=s_{i2}=\epsilon$, for some $0<\epsilon<\frac{1}{2}$, and $s_{i3}=1$, for $i \in \{1, 2\}$. Moreover, let the total budget be $B=3 \cdot \beta$, for some $\beta\geq 1$, and notice that this implies that the $\beta$ is the available reward for every $j \in \{1,2,3\}$). Finally let $T=1$. It is easy to see that in the optimal coverage solution, every proposal ends up with one review, i.e., one of the agents writes a review for the first two proposals, while the other writes for the third one. Thus, we achieve a total coverage of 3.

Now consider an outcome where both agents write a one review for each of the first two proposals. Thus, their utilities are $u_i=2\cdot(\frac{\beta}{2}- \epsilon)=\beta-2\cdot \epsilon>0$, for every $i \in \{1, 2\}$. It is easy to confirm that this is a pure Nash equilibrium. Since the utility of both agents is positive in this allocation, the only possible options for both them are either to write a review for one of the first two proposals (something that leads to a lower utility), or to write just one review for the third proposal (as constraint $T$ dictates). The latter deviation provides an agent with a utility of $\beta-1=\beta - 2\cdot \frac{1}{2}<\beta-2\cdot \epsilon$.

By the above discussion we get that the aforementioned pure Nash equilibrium, only two out of three proposals are covered, something that implies a \texttt{PoA}$=\frac{3}{2}$ \footnote{Notice that this is actually the only pure Nash equilibrium of this instance.}. The statement follows from the fact that this holds for any $\beta \geq 1$, so the overall available budget does not affect the coverage guarantee that a pure Nash equilibrium can achieve.
\end{proof}

We conclude this section with a positive result. Even though Proposition \ref{lem:bigB} demonstrates that the optimal coverage cannot be achieved at a pure Nash equilibrium regardless of the increase in the budget, we show that doubling the budget is enough for a PoA of 3 to be guaranteed.

\begin{theorem}
The proportional mechanism guarantees a {\normalfont $\texttt{PoA}_2=3$}. Moreover, this result is tight.
\end{theorem}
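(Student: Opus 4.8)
The plan is to prove the two directions separately: first that $\texttt{PoA}_2 \le 3$, i.e. that any pure Nash equilibrium of the proportional mechanism run with budget $2B$ covers at least $\frac{1}{3}$ of the proposals covered by the optimal solution with budget $B$; and then tightness, by exhibiting an instance matching this ratio. Throughout, fix an optimal (non-strategic, budget-$B$) assignment $\vecc q^{\OPT}$ and a pure Nash equilibrium $\vecc q \in \texttt{PNE}(\vecc s, 2B, T)$. Write $\beta = B/m$ for the per-proposal reward under the original budget, so the mechanism has $2\beta$ per proposal. Let $C^{\OPT} \subseteq P$ be the set of proposals covered at $\vecc q^{\OPT}$ and $C \subseteq P$ the set covered at $\vecc q$; the goal is $|C| \ge |C^{\OPT}|/3$.

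First I would set up the charging argument. Consider a proposal $P_j \in C^{\OPT} \setminus C$: it receives no review at the equilibrium, yet in the optimum some agent $a(j)$ reviews it at cost $s_{a(j),j} \cdot f(1) = s_{a(j),j} \le \beta$ (budget feasibility of $\vecc q^{\OPT}$, per proposal). Since $P_j$ is uncovered at $\vecc q$, if agent $a(j)$ were to add a review for $P_j$ she would be the sole reviewer and collect the full reward $2\beta > s_{a(j),j}$ — a strictly profitable deviation — \emph{unless} she is blocked by the maximum-effort constraint \Cref{def:max-effort}. Hence for every $P_j \in C^{\OPT}\setminus C$, the agent $a(j)$ is already spending so much effort at $\vecc q$ that she cannot afford the extra $s_{a(j),j}$. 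The key combinatorial step is then to bound how many such "blocked" proposals a single agent can be responsible for. If agent $i = a(j)$ is blocked, her current effort $\sum_{\ell} s_{i\ell} q_{i\ell}$ plus $s_{ij}$ exceeds $T$; but in the optimum she reviews $P_j$ within her own effort budget, and an averaging/pigeonhole comparison between her equilibrium workload and the cost $s_{ij}$ of the missed proposal should show that the number of proposals in $C^{\OPT}\setminus C$ charged to $i$ is at most twice the number of proposals $i$ actually reviews at $\vecc q$ (the factor $2$ coming from the doubled budget: each review $i$ performs at equilibrium earns her at least enough to "pay for" the effort, and the doubled reward is what forces the counting to close). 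Summing over all agents $i$, $|C^{\OPT}\setminus C| \le 2\sum_i |\{\ell : q_{i\ell}=1\}|$, while $\sum_i |\{\ell: q_{i\ell}=1\}| \le \sum_{P_\ell \in C}(\text{reviews on }P_\ell)$, and a separate argument bounds the total number of equilibrium reviews on covered proposals by $|C|$ (at a Nash equilibrium a proposal with $k$ reviews pays its $k$-th reviewer $2\beta/k - s \ge 0$, so extra reviews are limited, giving roughly one "effective" review per covered proposal). Combining, $|C^{\OPT}| \le |C^{\OPT}\cap C| + |C^{\OPT}\setminus C| \le |C| + 2|C|= 3|C|$.

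The main obstacle I anticipate is making the charging of blocked proposals to agents rigorous: an agent may be $a(j)$ for several uncovered proposals, and I must argue that her \emph{equilibrium} effort — which is what blocks her — is itself "paid for" by proposals she does review, at a rate that loses only a factor of $2$, not more. This requires carefully comparing $T$ against the sum of equilibrium costs, using that every equilibrium review of $i$ yields nonnegative utility (so its reward $2\beta/k_\ell$ dominates its cost $s_{i\ell}$, where $k_\ell$ is the number of reviewers on $P_\ell$), and simultaneously that the missed proposal costs $s_{ij} \le \beta$. The doubled budget is exactly what turns the inequality around; with only budget $B$ the argument fails, consistent with \Cref{lem:covapprx}. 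For tightness, I would build on the instance of \Cref{lem:bigB}: a configuration with $3$ proposals per "block," where the optimum covers all three but at the unique equilibrium the two agents pile onto the two cheap proposals and the constraint $T=1$ forbids either from picking up the third; replicating this block $m/3$ times yields $|C^{\OPT}| = m$, $|C| = 2m/3$, hence ratio exactly $3/2$ — so to get $3$ one needs a refined instance where a single agent is forced (by $T$) to cover one cheap proposal while two others cluster, realizing the full factor $3$ in the limit; I would present such an instance explicitly and verify it is the unique PNE as in the earlier propositions.
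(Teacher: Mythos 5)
Your opening move matches the paper's: because the optimum with budget $B$ pays at most $\beta$ per review while the mechanism now offers $2\beta$, any agent who reviews a missed proposal in the optimum would profitably deviate unless she is already active/constrained at the equilibrium. But the counting you build on top of this has two genuine gaps. First, the claimed pigeonhole bound --- that each agent can be ``responsible for'' at most twice as many missed proposals as the number of reviews she writes at equilibrium --- is never derived and is false as stated: an agent whose equilibrium workload is a single expensive review near $T$ can be time-blocked from adding any of many cheap proposals she covers in the optimum; ruling such configurations out requires comparing her equilibrium utility against the deviation to her \emph{entire} optimal review set (the paper's inequality $\sum_{j \in A^{pne}_i}(2\beta/n_j - s_{ij}) \ge \sum_{j \in A^{opt}_i}(2\beta - s_{ij})$, combined with a time-constraint inequality involving $\min_{j \in A^{opt}_i} s_{ij}$), not just single-proposal additions. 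Second, the step ``total number of equilibrium reviews on covered proposals is at most $|C|$'' is simply wrong: at an equilibrium a covered proposal can carry arbitrarily many reviews (indeed the paper's own tight instance has $2k-1$ reviews on each covered proposal). The paper sidesteps this by counting \emph{rewards} rather than reviews: summing $2\beta/n_j$ over the equilibrium reviews of the relevant agents is at most $2\beta$ per covered proposal no matter how many reviewers pile on, which is what makes the bound $2\beta\,|K\setminus S|$ valid. It also restricts attention to a greedily chosen set $D^\star$ of at most $|S|$ agents covering all missed proposals, so that the slack term $\sum_{i\in D^\star}\min_{j\in A^{opt}_i}s_{ij}$ is bounded by $\beta|S|$; the contradiction $|S| > 2|K|/3$ then yields $\texttt{PoA}_2 \le 3$. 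Without a replacement for these two steps your argument does not close.

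On tightness you only observe that replicating the instance of \Cref{lem:bigB} gives ratio $3/2$ and defer the construction of a ratio-$3$ instance; this part is therefore missing. The paper gives an explicit family ($3k-1$ agents and proposals, split into groups $L_1,L_2$ and $F_1,F_2$ with skills $\epsilon$, $1$, $3$ and $T=1$) in which the optimum covers all $3k-1$ proposals but a pure Nash equilibrium under budget $2B$ covers only the $k$ proposals of $F_1$, giving ratio $(3k-1)/k \to 3$. You would need to supply such an instance and verify the equilibrium conditions to complete the claim that the bound is tight.
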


\begin{proof}
Let $\texttt{cOPT(B)}$ to be the optimal coverage solution of an instance under budget $B$. Assume that at $\texttt{cOPT(B)}$, a set of proposals $K \subseteq P$ is covered with at least one review, and let $\vecc q$ to be a strategy vector that defines a pure Nash equilibrium under budget $2\cdot B $. Let $S \subseteq P$, to be the set of proposals that are covered at $\texttt{cOPT(B)}$ and not at the pure Nash equilibrium defined by $\vecc q$. In addition, define $D \subseteq V$  as the set of agents that at $\texttt{cOPT(B)}$, write at least one review for a non-empty set of proposals in $S$. Finally, let $A^{opt}_i \subseteq S$  be the set of proposals for which an agent $i \in D$ writes a review in $\texttt{cOPT(B)}$. 

Now for the remainder of the proof, it is crucial to define set $D^\star \subseteq D$, which will be a sufficiently small subset of agents, the reviews of which in the optimal solution $\texttt{cOPT(B)}$, cover every proposal in $S$ with at least one review.  We build set $D^\star$ according to the following greedy procedure.  

\begin{algorithm}[ht]
		\caption{Greedy Set Construction $(D, S)$}
		\begin{algorithmic}[1]
			\State $D^\star=\emptyset$\textbf{;} $S^\star=\emptyset$
			\While{$S^\star \neq S$} 
			\State $h = \arg\max_{i \in D\setminus D^\star} |A^{opt}_i \cap (S\setminus S^\star)|$ \Comment{Find the agent from $D$, that covers the highest number of proposals in $S$ that are currently uncovered. Break ties arbitrarily.}
			\State $S^\star=S^\star \cup A^{opt}_h$ 
			\State $D^\star =D^\star \cup \{h\}$ \label{line:rr6} \vspace{-2pt}
			\EndWhile
			\State 
			\Return {$D^\star$}
		\end{algorithmic}
		\label{alg:GP}
	\end{algorithm} 
	
	 Our main goal is to find a subset of agents in $D$, the cardinality of which is at most $|S|$, and by considering the reviews that this subset writes in the optimal solution, every proposal in $S$ can be covered with at least on review. To this end, we start by defining set $S^\star$ as the set of the currently covered proposals (which is empty in the beginning). We then proceed by considering set $D^\star$ which is also initially empty. At each step we add to it the agent from set $D$ that covers (according to the optimal solution) the highest number of proposals in $S$ that are currently uncovered (we break the ties arbitrarily), and we update set $S^\star$ accordingly. The procedure stops when $S^\star$ becomes equal to $S$. It is easy to see that this will happen after at most $|S|$ steps, as otherwise this would mean that there is a step $j\leq |S|$ where an agent that covered zero non-covered proposals was added to set $D^\star$. However, since at each step we add the agent that covers the highest number of uncovered proposals, this would imply that at every step $j'>j$, the agent that is added to set $D^\star$ does not contribute to the coverage of set $S$, and thus that set $D$ cannot cover set $S$, a contradiction. Therefore, since at each step, an agent is added, and the procedure takes at most $|S|$ steps, we get that $|D^\star| \leq |S|$.

 We proceed by exploring some structural properties of the the pure Nash equilibrium $\vecc{q}$. Initially, notice that every agent $i \in D^\star$ writes at least one review at the pure Nash equilibrium $\vecc{q}$, as otherwise, by writing reviews for the proposals in $A^{opt}_i$ she would end up with a utility of, 

\begin{align*}
u_i&=\sum_{j \in A^{opt}_i} 2\cdot \beta - \sum_{j \in A^{opt}_i} s_{ij}  \geq \sum_{j \in A^{opt}_i} 2\cdot \beta - \sum_{j \in A^{opt}_i} \beta  >0,
\end{align*}
where the first inequality holds due to the fact that for each $j \in A^{opt}_i$, we have that $s_{ij}\leq \beta$, as otherwise $\texttt{cOPT(B)}$ would not be budget feasible. 

From the previous argument, we get that every agent $i  \in D^\star$ in the pure Nash equilibrium defined by $\vecc q$, writes at least one review for a subset of proposals in $K\setminus S$. For every $i \in D^\star$, name this subset of proposals $A^{pne}_i$. Finally, let $n_j$ to be the number of reviews that a proposal $j \in K\setminus S$ gathers at the pure Nash equilibrium. For every $i \in D^\star$, the definition of the pure Nash equilibrium gives us the following:

\begin{equation} \label{eq:pnecov}
    \sum_{j \in A^{pne}_i}\left(\frac{2 \cdot \beta}{n_j}-s_{ij}\right) \geq \sum_{j \in A^{opt}_i}\left(2 \cdot \beta -s_{ij}\right).
\end{equation}
At the same time, we have
 
\begin{equation} \label{eq:pneskill}
    \sum_{j \in A^{pne}_i}s_{ij} > \sum_{j \in A^{opt}_i}s_{ij}-\min_{j \in A^{opt}_i}s_{ij},
\end{equation} 
as otherwise, agent $i$ would be able to write a review for every $ j \in A^{pne}_i$, plus a review for $\argmin_{j \in A^{opt}_i}s_{ij}$, without violating $T$. This however would improve utility, something that contradicts the pure Nash equilibrium assumption.

By combining equations (\ref{eq:pnecov}) and (\ref{eq:pneskill}), we get 

\begin{align*}
    &\sum_{j \in A^{pne}_i}\frac{2 \cdot \beta}{n_j}-\min_{j \in A^{opt}_i}s_{ij} > \sum_{j \in A^{opt}_i}2 \cdot \beta\\
    \Rightarrow\quad 
    &\sum_{j \in A^{pne}_i}\frac{2 \cdot \beta}{n_j} > \sum_{j \in A^{opt}_i}2 \cdot \beta  -\min_{j \in A^{opt}_i}s_{ij}.
\end{align*}
Since this holds for every agent $i \in D^\star$, we have 
\begin{align*}
   \sum_{i \in D^\star}\sum_{j \in A^{pne}_i}\frac{2 \cdot \beta}{n_j} > \sum_{i \in D^\star}\sum_{j \in A^{opt}_i}2 \cdot \beta  -\sum_{i \in D^\star}\min_{j \in A^{opt}_i}s_{ij}.
\end{align*}
Now notice that,
\begin{align*}
\sum_{i \in D^\star}\sum_{j \in A^{pne}_i}\frac{2 \cdot \beta}{n_j}\leq \sum_{i \in V}\sum_{j \in A^{pne}_i}\frac{2 \cdot \beta}{n_j}=2\cdot \beta \cdot |K\setminus S |, 
\end{align*}
 where the first inequality holds because we sum over all agents in $V \supseteq D^\star$, while the the sum in the right hand side of the inequality also represents the sum of the payments that the agents derive for each proposal, which is equal to $2 \cdot \beta$ (per proposal). Additionally,

\begin{align*}
\sum_{i \in D^\star}\sum_{j \in A^{opt}_i}2 \cdot \beta \geq 2\cdot \beta \cdot |S|, 
\end{align*}
as we know that the agents in $D^\star$, cover every proposal in $S$ with at least one review in $\texttt{cOPT(B)}$. Therefore, we derive that,

\begin{align*}
 &\qquad2\cdot \beta \cdot |K \setminus S|>2\cdot \beta \cdot |S|-\sum_{i \in D^\star}\min_{j \in A^{opt}_i}s_{ij}\\ 
 &\Rightarrow \sum_{i \in D^\star}\min_{j \in A^{opt}_i}s_{ij} > 2\cdot \beta \cdot (|S|-|K\setminus S|) \\
 &\Rightarrow \sum_{i \in D^\star}\min_{j \in A^{opt}_i}s_{ij} >  2\cdot \beta \cdot (2 \cdot |S|-|K|).
\end{align*}
On the other hand, we have that

\begin{align*}
  \sum_{i \in D^\star}\min_{j \in A^{opt}_i}s_{ij}&\leq |D^\star|\cdot \max_{i \in D^\star}\min_{j \in A^{opt}_i}s_{ij} \\ & \leq |S|\cdot  \max_{i \in D^\star}\min_{j \in A^{opt}_i}s_{ij} \\ & \leq |S| \cdot \beta.
\end{align*}
The statement follows from the fact that if


\begin{equation*}
2\cdot \beta \cdot (2 \cdot |S|-|K|) > |S| \cdot \beta \iff |S|> \frac{2|K|}{3},
\end{equation*}
then we have a contradiction. 

Regarding tightness, consider the following example: there are $3\cdot k-1$ agents and $3\cdot k-1$ proposals, where $k\geq 2$. The total budget is $3 \cdot k-1$, thus $\beta =1$ for each proposal, and finally $T=1$. The agents are split in sets $L_1$ and $L_2$, with cardinalities $2 \cdot k-1$ and  $k$ respectively, while the proposals are split in sets  $F_1$ and $F_2$, with cardinalities $k$ and $2 \cdot k-1$ respectively. Every agent $i \in L_1$ has $s_{ij}=\epsilon$ for $j\in F_1$, and $s_{ij}=1$ for $j\in F_2$. On the other hand, every agent $i \in L_2$ has $s_{ij}=1$ for $j\in F_1$, and $s_{ij}=3$ for $j\in F_2$. Under this configuration, it is clear that the agents of $L_2$ cannot write more than one reviews, as $T=1$. Now it is easy to see that at $\texttt{cOPT(B)}$ every agent in $L_1$ writes one review for just one proposal in $F_2$ so that every proposal in $F_2$ has exactly one review, while every agent in $L_2$ writes one review for just one proposal in $F_1$ so that every proposal in $F_1$, once again has exactly one review. Therefore, every proposal is covered in the optimal solution.

Now consider the strategy vector $\vecc q$ (under budget $2\cdot B$) that defines the following allocation: every agent in $L_1$ writes one review for every proposal in $F_1$, while every agent in $L_2$ does not write a review. Initially, notice that for the agents of $L_2$. there is no profitable deviation as they cannot write a review for the proposals in $F_2$, and if they try to write a review for a proposal in $F_1$, their utility is $\frac{2}{2\cdot k}-1<0$. As for the agents of $L_1$, their only meaningful deviation is to write one review for a proposal in $F_2$, but this provides them with a utility of $2-1<\frac{2\cdot k}{2\cdot k -1}-k\cdot \epsilon =u_i$, where $u_i$ is their current one. Thus, this is a pure Nash equilibrium under budget $2\cdot B$ \footnote{It is easy to confirm that this is also a pure Nash equilibrium under budget $B$.}, that is also $\frac{1}{3}$-approximate to $\texttt{cOPT(B)}$ \footnote{Notice that if we further increase the budget, this is no longer a pure Nash equilibrium.}.
\end{proof}

\section{Performance with Real Data}\label{sec:RD}

A commercially important application of the previous model is \emph{Project Catalyst}, which was outlined in the introduction of this work (and additional information can be found \href{https://projectcatalyst.io/}{here}). Following our model, the community advisors (CA) who write reviews for the proposals take up the role of the reviewers. Each of them can write as many reviews as they like, for any community generated proposal they are prefer. These reviews are then reviewed again by the veteran community advisors (vCA) and are assigned a grade that can be either `excellent', `good' or `filtered out', for reviews found to be below the minimum quality threshold. The CA's know the public guidelines that define the requirements of the different qualities and should, at the time of submission, have a good estimate of the grade that their review could get. According the terminology of our model, the parameter $\alpha = 3$ is used to increase the rewards given to excellent reviews (i.e. a `good' review contributes 1 to the quality and an excellent contributes 3). Following the reviewing phase, the voters can cast a `Yes', `No' or `Abstain' vote for each proposal, with every vote having weight proportional to the users' stake in ADA, the currency used by Cardano. More details about the voting process (and it's cryptographic guarantees) can be found in~\cite{treasury_paper}. 

We present the data from Fund7, the latest completed (as of May 2022) round of funding. The data set is publicly available \href{https://docs.google.com/spreadsheets/d/1ZM3ytXkMB34iSo2LamNxpver-rs9fShnpeNEia-VdBo/edit?usp=sharing}{here}.  From the total amount of \$8,000,000:
\begin{itemize}
    \item \$6,400,000 were rewarded to funded proposals.
    \item \$320,000, spread across the $712$ candidate proposals, were used to pay for CA reviews.
    \item \$80,000 were awarded to vCA's for reviewing the CA reviews.
\end{itemize}
There were $541$ CA's who submitted at least one (possibly `filtered out') review. In total, they produced: 
\begin{itemize}
    \item 357 `excellent' reviews.
    \item 4,832 `good' reviews.
    \item 3,971 `filtered Out' reviews (including many that were duplicates, algorithmically generated or empty).
\end{itemize}
On average, there were $5.5$ reviews per proposal, with total quality $6.3$.  These results suggest that the rewards for `excellent' reviews were not as high as they should be, given the CA's expertise. This is evident by the low number of `excellent' reviews compared to the `good' ones. Specifically, even if they were treated as being 3 times more valuable, they probably required more than 3 times the time to produce, motivating a choice of $a = 4$ instead, for the following funding rounds. In addition, we observe a high level of `filtered out' reviews. This is to be expected, given that participation is open and anonymous, with many agents leaving incomplete, duplicate or automatically generated reviews that were easily discarded. Despite this, fewer than 3 proposals had no reviews at all, showing that these `spam' reviews did not deter the other CA's, who reviewed every proposal knowing they would eventually be rewarded. Therefore, the \emph{coverage} guarantees were mostly attained, in line with the theoretical results.

\section{Conclusion and Future Directions}

Our work leaves several intriguing open questions. In particular, it would be interesting to further explore both of the presented scenarios, and see whether our results can be extended when the set of the available strategies is richer. Although it might initially seem that this is the case, even answering the question of whether pure Nash equilibria exist in such cases seems challenging, as the potential function that we provide in Section \ref{sec:mprop} no longer works, and a proper modification is not trivial. In addition, for the case of multiple proposals, a natural direction would be to examine if our mechanism can also achieve some guarantees with respect to the quality objective as well, and in general, whether these two objectives are compatible-can be achieved (at some degree) at the same time. Finally, it would be interesting to see whether our mechanism is the optimal one under this setting, or if it is possible to provide better guarantees by applying different payment schemes.

\bibliographystyle{abbrvnat}
\bibliography{main}

\begin{thebibliography}{35}
\providecommand{\natexlab}[1]{#1}
\providecommand{\url}[1]{\texttt{#1}}
\expandafter\ifx\csname urlstyle\endcsname\relax
  \providecommand{\doi}[1]{doi: #1}\else
  \providecommand{\doi}{doi: \begingroup \urlstyle{rm}\Url}\fi

\bibitem[Amanatidis et~al.(2016)Amanatidis, Birmpas, and
  Markakis]{AmanatidisBM16}
G.~Amanatidis, G.~Birmpas, and E.~Markakis.
\newblock Coverage, matching, and beyond: New results on budgeted mechanism
  design.
\newblock In Y.~Cai and A.~Vetta, editors, \emph{Web and Internet Economics -
  12th International Conference, {WINE} 2016, Montreal, Canada, December 11-14,
  2016, Proceedings}, volume 10123 of \emph{Lecture Notes in Computer Science},
  pages 414--428. Springer, 2016.

\bibitem[Amanatidis et~al.(2017)Amanatidis, Birmpas, and
  Markakis]{AmanatidisBM17}
G.~Amanatidis, G.~Birmpas, and E.~Markakis.
\newblock On budget-feasible mechanism design for symmetric submodular
  objectives.
\newblock In N.~R. Devanur and P.~Lu, editors, \emph{Web and Internet Economics
  - 13th International Conference, {WINE} 2017, Bangalore, India, December
  17-20, 2017, Proceedings}, volume 10660 of \emph{Lecture Notes in Computer
  Science}, pages 1--15. Springer, 2017.

\bibitem[Anari et~al.(2014)Anari, Goel, and Nikzad]{AnariGN14}
N.~Anari, G.~Goel, and A.~Nikzad.
\newblock Mechanism design for crowdsourcing: An optimal 1-1/e competitive
  budget-feasible mechanism for large markets.
\newblock In \emph{55th {IEEE} Annual Symposium on Foundations of Computer
  Science, {FOCS} 2014, Philadelphia, PA, USA, October 18-21, 2014}, pages
  266--275. {IEEE} Computer Society, 2014.

\bibitem[Anshelevich and Postl(2016)]{AnshelevichP16}
E.~Anshelevich and J.~Postl.
\newblock Profit sharing with thresholds and non-monotone player utilities.
\newblock \emph{Theory Comput. Syst.}, 59\penalty0 (4):\penalty0 563--580,
  2016.

\bibitem[Anshelevich et~al.(2021)Anshelevich, Filos{-}Ratsikas, Shah, and
  Voudouris]{AnshelevichF0V21}
E.~Anshelevich, A.~Filos{-}Ratsikas, N.~Shah, and A.~A. Voudouris.
\newblock Distortion in social choice problems: The first 15 years and beyond.
\newblock In Z.~Zhou, editor, \emph{Proceedings of the Thirtieth International
  Joint Conference on Artificial Intelligence, {IJCAI} 2021, Virtual Event /
  Montreal, Canada, 19-27 August 2021}, pages 4294--4301. ijcai.org, 2021.

\bibitem[Archak and Sundararajan(2009)]{ArchakS09}
N.~Archak and A.~Sundararajan.
\newblock Optimal design of crowdsourcing contests.
\newblock In J.~F.~N. Jr. and W.~L. Currie, editors, \emph{Proceedings of the
  International Conference on Information Systems, {ICIS} 2009, Phoenix,
  Arizona, USA, December 15-18, 2009}, page 200. Association for Information
  Systems, 2009.

\bibitem[Arnosti and Weinberg(2019)]{arnosti2018bitcoin}
N.~Arnosti and S.~M. Weinberg.
\newblock Bitcoin: {A} natural oligopoly.
\newblock In A.~Blum, editor, \emph{10th Innovations in Theoretical Computer
  Science Conference, {ITCS} 2019, January 10-12, 2019, San Diego, California,
  {USA}}, volume 124 of \emph{LIPIcs}, pages 5:1--5:1. Schloss Dagstuhl -
  Leibniz-Zentrum f{\"{u}}r Informatik, 2019.

\bibitem[Augustine et~al.(2015)Augustine, Chen, Elkind, Fanelli, Gravin, and
  Shiryaev]{AugustineCEFGS15}
J.~Augustine, N.~Chen, E.~Elkind, A.~Fanelli, N.~Gravin, and D.~Shiryaev.
\newblock Dynamics of profit-sharing games.
\newblock \emph{Internet Math.}, 11\penalty0 (1):\penalty0 1--22, 2015.

\bibitem[Bachrach et~al.(2013)Bachrach, Syrgkanis, and Vojnovic]{BachrachSV13}
Y.~Bachrach, V.~Syrgkanis, and M.~Vojnovic.
\newblock Incentives and efficiency in uncertain collaborative environments.
\newblock In Y.~Chen and N.~Immorlica, editors, \emph{Web and Internet
  Economics - 9th International Conference, {WINE} 2013, Cambridge, MA, USA,
  December 11-14, 2013, Proceedings}, volume 8289 of \emph{Lecture Notes in
  Computer Science}, pages 26--39. Springer, 2013.

\bibitem[Benade et~al.(2021)Benade, Nath, Procaccia, and
  Shah]{benade2021preference}
G.~Benade, S.~Nath, A.~D. Procaccia, and N.~Shah.
\newblock Preference elicitation for participatory budgeting.
\newblock \emph{Management Science}, 67\penalty0 (5):\penalty0 2813--2827,
  2021.

\bibitem[Bil{\`{o}} et~al.(2019)Bil{\`{o}}, Gourv{\`{e}}s, and
  Monnot]{BiloGM19}
V.~Bil{\`{o}}, L.~Gourv{\`{e}}s, and J.~Monnot.
\newblock Project games.
\newblock In P.~Heggernes, editor, \emph{Algorithms and Complexity - 11th
  International Conference, {CIAC} 2019, Rome, Italy, May 27-29, 2019,
  Proceedings}, volume 11485 of \emph{Lecture Notes in Computer Science}, pages
  75--86. Springer, 2019.

\bibitem[Birmpas et~al.(2022)Birmpas, Kovalchuk, Lazos, and
  Oliynykov]{BirmpasKLO22}
G.~Birmpas, L.~Kovalchuk, P.~Lazos, and R.~Oliynykov.
\newblock Parallel contests for crowdsourcing reviews: Existence and quality of
  equilibria.
\newblock In \emph{Proceedings of the 4th {ACM} Conference on Advances in
  Financial Technologies, {AFT} 2022, Cambridge, MA, USA, September 19-21,
  2022}, pages 268--280. {ACM}, 2022.
\newblock \doi{10.1145/3558535.3559776}.

\bibitem[Chan et~al.(2020)Chan, Parkes, and Lakhani]{ChanPL20}
H.~Chan, D.~C. Parkes, and K.~R. Lakhani.
\newblock The price of anarchy of self-selection in tullock contests.
\newblock In A.~E.~F. Seghrouchni, G.~Sukthankar, B.~An, and N.~Yorke{-}Smith,
  editors, \emph{Proceedings of the 19th International Conference on Autonomous
  Agents and Multiagent Systems, {AAMAS} '20, Auckland, New Zealand, May 9-13,
  2020}, pages 1795--1797. International Foundation for Autonomous Agents and
  Multiagent Systems, 2020.

\bibitem[Chawla et~al.(2019)Chawla, Hartline, and Sivan]{ChawlaHS19}
S.~Chawla, J.~D. Hartline, and B.~Sivan.
\newblock Optimal crowdsourcing contests.
\newblock \emph{Games Econ. Behav.}, 113:\penalty0 80--96, 2019.

\bibitem[DiPalantino and Vojnovic(2009)]{DiPalantinoV09}
D.~DiPalantino and M.~Vojnovic.
\newblock Crowdsourcing and all-pay auctions.
\newblock In J.~Chuang, L.~Fortnow, and P.~Pu, editors, \emph{Proceedings 10th
  {ACM} Conference on Electronic Commerce (EC-2009), Stanford, California, USA,
  July 6--10, 2009}, pages 119--128. {ACM}, 2009.

\bibitem[Elkind et~al.(2021)Elkind, Ghosh, and Goldberg]{elkind2021contest}
E.~Elkind, A.~Ghosh, and P.~W. Goldberg.
\newblock Contest design with threshold objectives.
\newblock In \emph{Web and Internet Economics - 17th International Conference,
  {WINE} 2021, Potsdam, Germany, December 14-17, 2021, Proceedings}, volume
  13112 of \emph{Lecture Notes in Computer Science}, page 554. Springer, 2021.

\bibitem[Elkind et~al.(2022)Elkind, Ghosh, and
  Goldberg]{elkind2022simultaneous}
E.~Elkind, A.~Ghosh, and P.~W. Goldberg.
\newblock Simultaneous contests with equal sharing allocation of prizes:
  Computational complexity and price of anarchy.
\newblock \emph{arXiv preprint arXiv:2207.08151}, 2022.

\bibitem[Gavious and Minchuk(2014)]{GaviousM14}
A.~Gavious and Y.~Minchuk.
\newblock Revenue in contests with many participants.
\newblock \emph{Oper. Res. Lett.}, 42\penalty0 (2):\penalty0 119--122, 2014.

\bibitem[Glazer and Hassin(1988)]{GH88}
A.~Glazer and R.~Hassin.
\newblock Optimal contests.
\newblock \emph{Economic Inquiry}, 26(1):\penalty0 133--143, 1988.

\bibitem[Goel et~al.(2014)Goel, Nikzad, and Singla]{GoelNS14}
G.~Goel, A.~Nikzad, and A.~Singla.
\newblock Mechanism design for crowdsourcing markets with heterogeneous tasks.
\newblock In J.~P. Bigham and D.~C. Parkes, editors, \emph{Proceedings of the
  Seconf {AAAI} Conference on Human Computation and Crowdsourcing, {HCOMP}
  2014, November 2-4, 2014, Pittsburgh, Pennsylvania, {USA}}. {AAAI}, 2014.

\bibitem[Gollapudi et~al.(2017)Gollapudi, Kollias, Panigrahi, and
  Pliatsika]{GollapudiKPP17}
S.~Gollapudi, K.~Kollias, D.~Panigrahi, and V.~Pliatsika.
\newblock Profit sharing and efficiency in utility games.
\newblock In K.~Pruhs and C.~Sohler, editors, \emph{25th Annual European
  Symposium on Algorithms, {ESA} 2017, September 4-6, 2017, Vienna, Austria},
  volume~87 of \emph{LIPIcs}, pages 43:1--43:14. Schloss Dagstuhl -
  Leibniz-Zentrum f{\"{u}}r Informatik, 2017.

\bibitem[Gravin et~al.(2019)Gravin, Jin, Lu, and Zhang]{GravinJLZ19}
N.~Gravin, Y.~Jin, P.~Lu, and C.~Zhang.
\newblock Optimal budget-feasible mechanisms for additive valuations.
\newblock In A.~Karlin, N.~Immorlica, and R.~Johari, editors, \emph{Proceedings
  of the 2019 {ACM} Conference on Economics and Computation, {EC} 2019,
  Phoenix, AZ, USA, June 24-28, 2019}, pages 887--900. {ACM}, 2019.

\bibitem[Koutsoupias and Papadimitriou(2009)]{KP09}
E.~Koutsoupias and C.~H. Papadimitriou.
\newblock Worst-case equilibria.
\newblock \emph{Comput. Sci. Rev.}, 3\penalty0 (2):\penalty0 65--69, 2009.

\bibitem[Leonardi et~al.(2021)Leonardi, Monaco, Sankowski, and
  Zhang]{LeonardiMSZ21}
S.~Leonardi, G.~Monaco, P.~Sankowski, and Q.~Zhang.
\newblock Budget feasible mechanisms on matroids.
\newblock \emph{Algorithmica}, 83\penalty0 (5):\penalty0 1222--1237, 2021.

\bibitem[Luo et~al.(2016)Luo, Das, Tan, and Xia]{luo2016incentive}
T.~Luo, S.~K. Das, H.~P. Tan, and L.~Xia.
\newblock Incentive mechanism design for crowdsourcing: An all-pay auction
  approach.
\newblock \emph{ACM Transactions on Intelligent Systems and Technology (TIST)},
  7\penalty0 (3):\penalty0 1--26, 2016.

\bibitem[Moldovanu and Sela(2001)]{MS01}
B.~Moldovanu and A.~Sela.
\newblock The optimal allocation of prizes in contests.
\newblock \emph{American Economic Review}, 91(3):\penalty0 542--558, 2001.

\bibitem[Monderer and Shapley(1996)]{Sap1996}
D.~Monderer and L.~S. Shapley.
\newblock Potential games.
\newblock \emph{Games and Economic Behavior}, 14\penalty0 (1):\penalty0
  124--143, 1996.

\bibitem[Nakamoto(2008)]{nakamoto2008bitcoin}
S.~Nakamoto.
\newblock Bitcoin: A peer-to-peer electronic cash system.
\newblock 2008.
\newblock URL \url{https://bitcoin.org/bitcoin.pdf}.

\bibitem[Polevoy et~al.(2014)Polevoy, Trajanovski, and de~Weerdt]{PolevoyTW14}
G.~Polevoy, S.~Trajanovski, and M.~de~Weerdt.
\newblock Nash equilibria in shared effort games.
\newblock In A.~L.~C. Bazzan, M.~N. Huhns, A.~Lomuscio, and P.~Scerri, editors,
  \emph{International conference on Autonomous Agents and Multi-Agent Systems,
  {AAMAS} '14, Paris, France, May 5-9, 2014}, pages 861--868. {IFAAMAS/ACM},
  2014.

\bibitem[Siegel(2009)]{siegel2009all}
R.~Siegel.
\newblock All-pay contests.
\newblock \emph{Econometrica}, 77\penalty0 (1):\penalty0 71--92, 2009.

\bibitem[Singer(2010)]{Singer10}
Y.~Singer.
\newblock Budget feasible mechanisms.
\newblock In \emph{51th Annual {IEEE} Symposium on Foundations of Computer
  Science, {FOCS} 2010, October 23-26, 2010, Las Vegas, Nevada, {USA}}, pages
  765--774. {IEEE} Computer Society, 2010.

\bibitem[Taylor(1995)]{T95}
C.~Taylor.
\newblock Digging for golden carrots: an analysis of research tournaments.
\newblock \emph{American Economic Review}, 85(4):\penalty0 872--90, 1995.

\bibitem[Vetta(2002)]{Vetta02}
A.~Vetta.
\newblock Nash equilibria in competitive societies, with applications to
  facility location, traffic routing and auctions.
\newblock In \emph{43rd Symposium on Foundations of Computer Science {(FOCS}
  2002), 16-19 November 2002, Vancouver, BC, Canada, Proceedings}, page 416.
  {IEEE} Computer Society, 2002.

\bibitem[Vojnovic(2016)]{V2016}
M.~Vojnovic.
\newblock \emph{Contest Theory: Incentive Mechanisms and Ranking Methods}.
\newblock Cambridge University Press, 2016.

\bibitem[Zhang et~al.(2019)Zhang, Oliynykov, and Balogun]{treasury_paper}
B.~Zhang, R.~Oliynykov, and H.~Balogun.
\newblock A treasury system for cryptocurrencies: Enabling better collaborative
  intelligence.
\newblock In \emph{26th Annual Network and Distributed System Security
  Symposium, {NDSS} 2019, San Diego, California, USA, February 24-27, 2019}.
  The Internet Society, 2019.

\end{thebibliography}

\end{document}